\newtheorem*{rep@theorem}{\rep@title}
\newcommand{\newreptheorem}[2]{%
\newenvironment{rep#1}[1]{%
 \def\rep@title{#2 \ref{##1}}%
 \begin{rep@theorem}}%
 {\end{rep@theorem}}}
\newtheorem{theo}{Theorem} 
\newtheorem{lemma}[theo]{Lemma}
\newtheorem{corol}[theo]{Corollary}
\newtheorem{defn}[theo]{Definition}
\newtheorem{rem}[theo]{Remark}
\newtheorem{claim}[theo]{Claim}
\newtheorem{prop}[theo]{Proposition}
\def\cG{\mathcal{G}}
\def\1{\mathbbm{1}}
\def\eps{\varepsilon}
\newcommand{\G}{\EuScript G}
\newcommand{\C}{\EuScript C}
\newcommand{\X}{\EuScript X}
\newcommand{\eQ}{\EuScript Q}
\newcommand{\Q}{\EuScript Q}
\newcommand{\eL}{\EuScript L}
\newcommand{\bC}{\mathbf C}
\newcommand{\bH}{\mathbf H}
\newcommand{\bW}{\mathbf W}
\newcommand{\f}{\mathbb F}
\DeclareMathOperator{\supp}{supp}
\renewcommand{\geq}{\geqslant}
\renewcommand{\leq}{\leqslant}
\def\rk#1{{\rm rk}{ #1}}
\newcommand{\Cay}{{\sf Cay}}
\DeclareMathOperator{\PSL}{PSL}
\begin{document}

\title{Quantum Tanner codes}

\author{Anthony Leverrier\thanks{Inria, France. {\tt anthony.leverrier@inria.fr}} \qquad Gilles Z{\'e}mor\thanks{Institut de Math\'ematiques
de Bordeaux, UMR 5251, France. {\tt zemor@math.u-bordeaux.fr}}}

\date{\today}	

\maketitle

\begin{abstract}
Tanner codes are long error correcting codes obtained from short codes and a graph, with bits on the edges and parity-check constraints from the short codes enforced at the vertices of the graph. Combining good short codes together with a spectral expander graph yields the celebrated expander codes of Sipser and Spielman, which are asymptotically good classical LDPC codes.

In this work we apply this prescription to the left-right Cayley complex that lies at the heart of the recent construction of a $c^3$ locally testable code by Dinur \textit{et al.} Specifically, we view this complex as two graphs that share the same set of edges. By  defining a Tanner code on each of those graphs we obtain two classical codes that together define a quantum code. This construction can be seen as a simplified variant of the Panteleev and Kalachev asymptotically good quantum LDPC code, with improved estimates for its minimum distance. This quantum code is closely related to the Dinur \textit{et al.}\ code in more than one sense: indeed, we prove a theorem that simultaneously gives a linearly growing minimum distance for the quantum code and recovers the local testability of the Dinur \textit{et al.}\ code. 
\end{abstract}


\section{Introduction}
\subsection{Contributions}

Constructing good linear error correcting codes has been a major endeavor of the information theory community since the early existence results of Shannon. In the past decades, codes have found new playgrounds in fields such as program checking, probabilistically checkable proofs or even quantum computing. Such applications require additional properties beside the capability to correct errors.
For instance, a \emph{locally testable code} (LTC) comes with the remarkable feature that it is possible to read a contant number of bits of a noisy message and estimate how far the message is from the code. In a recent breakthrough, Dinur \textit{et al.}\ \cite{DEL21} showed that LTCs exist even when requiring constant rate, constant distance and constant locality. 
For quantum computing, codes must be able to correct two types of errors: (classical) bit flips and (quantum) phase flips. For this reason, one can define \emph{quantum codes} by giving a pair of linear codes $\C_0, \C_1$ that will each correct one type of errors. 
These codes cannot be chosen arbitrarily: they must satisfy some compatibility
condition that reads $\C_1 \supset \C_0^\perp$. Of major interest --- for theory
and experiments --- are quantum low-density parity-check (LDPC) codes
corresponding to the case when both $\C_0 = \ker H_0$ and $\C_1 = \ker H_1$ are defined by \emph{sparse} parity-check matrices $H_0$ and $H_1$. Here, the major open question concerned the existence of good quantum LDPC codes displaying a constant rate and a constant (relative) distance, and was answered positively in the breakthrough work of Panteleev and Kalachev \cite{PK21}, which also provided an alternative construction of classical LTC. 

A specific combinatorial object lies at the heart of the construction of the LTC
of Dinur \textit{et al.}\ \cite{DEL21}: a \emph{square complex}, and more specifically a
\emph{left-right Cayley complex}, which is a generalisation of expander graphs in higher dimension with cells of dimension 0 (vertices), 1 (edges) and 2 (squares, hence the name). We note that left-right Cayley complexes are also equal to balanced products of Cayley graphs, as introduced by Breuckmann and Eberhardt \cite{BE21b}.
Recalling the seminal expander codes of Sipser and Spielman \cite{SS96} building on the ideas of Tanner \cite{T81} to obtain good classical LDPC codes by putting the bits on the edges of an expander graph and the parity-check constraints on its vertices \textit{via} codes of constant size, it is tempting to apply a similar recipe with higher dimensional objects such as square complexes. In that case, the bits are naturally placed on the 2-dimensional cells of the complex (squares) and the constraints are again enforced at the vertices. The novelty then is that the local view of a vertex becomes a matrix of bits, and allows one to put constraints corresponding to a small tensor product code. The consequence is far reaching: the redundancy between the row constraints and column constraints of the small tensor codes translates into a form of robustness, which can then propagate to the entire code through the square complex, giving in particular the LTCs of Dinur \textit{et al.}\ when the square complex is sufficiently expanding. \\

In the present paper, we apply the Tanner code prescription to the quantum case and construct quantum codes with qubits on the squares and constraints on the vertices of a left-right Cayley complex. The condition $\C_1 \supset \C_0^\perp$ prevents the constraints to be those of small tensor codes, but should come from the dual of such tensor codes. We show that if the complex is sufficiently expanding (\textit{e.g.}\ the complex of \cite{DEL21}) and if the small dual tensor codes exhibit robustness (which is true with high probability for random codes \cite{PK21}), then the construction gives rise to a family of asymptotically good LDPC codes displaying constant rate and linear minimal distance.
The present construction borrows a lot from \cite{PK21}, indeed it uses the same
ingredients, since \cite{PK21} also uses Tanner codes and has a left-right
Cayley complex embedded in its construction: the proposed code family may
therefore be seen as a simplified
variant of \cite{PK21}. However, we wish to stress that it also involves a
conceptual shift: the recent series of breakthrough asymptotic constructions of LPDC codes
with large distances \cite{HHO20, PK20, EB21, PK21} arguably relies upon
increasingly refined notions of chain-complex products that improve upon the
simple product idea of \cite{TZ13}. Our approach breaks away from this paradigm
by proposing to take a geometric (square) complex and directly apply to it the
Tanner code strategy. Indeed, we show that the square complex can be viewed as
two ordinary graphs that share the same set of edges, and the two classical
codes $\C_0,\C_1$ that make up the quantum code are simply defined as classical
Tanner codes on these two graphs.
We obtain improved estimates for the minimum distance of
the resulting quantum code, which for any given code rate scales like
$n/\Delta^{3/2+\eps}$, where $n$ is the code length, $\Delta$ is the degree of the underlying Cayley
graphs, and $\eps >0$ can be taken to be arbitrarily close to $0$.

Interestingly, Panteleev and Kalachev showed that there is often a classical LTC hiding behind a good quantum LDPC code, and 
it turns out that we can recover exactly the LTC of Dinur \textit{et al.}\ from our quantum Tanner codes.\footnote{Note however that it is not recovered through the prescription of Lemma 1 in \cite{PK21} which would give another $c^3$-LTC, namely $\ker H_0^T$, with degraded parameters.}
It can indeed be argued that the present construction is the
missing link between the quantum code construction of \cite{PK21} and the LTC construction of
\cite{DEL21}, for we show that the linear minimum distance of our quantum code
and the local testability of the Dinur \textit{et al.}\ code are direct consequences of a
uniting Theorem that is the main technical result of the present work.

\subsection{Context and history}
\paragraph{Locally testable codes.}

A binary linear code is a subspace of $\f_2^n$.
A $\kappa$-locally testable code with $q$ queries 
is a code $C$ that comes with a tester: the tester
requires access to at most $q$ of bits of any given $n$-bit word $x$, accepts all words of the code,
and rejects a word $x \notin C$ with probability $\geq \frac \kappa nd(x,C)$, 
where $d(x,C)$ is the Hamming distance from $x$ to the code, and $\kappa$ is a
constant called the detection probability. 
A $c^3$-LTC, as constructed in \cite{DEL21}, is such that the rate and distance of the code, as well as the number $q$ of queries, are all constant. In that example, the test simply picks a vertex of the left-right Cayley complex and checks whether the local conditions corresponding to the small tensor code are satisfied. 

LTCs were defined in the early 90s \cite{BFL91} and were first studied in the context of probabilistically checkable proofs (PCPs), before being investigated as mathematical objects of interest notably in \cite{GS06}. 
A good overview of the field can be found in \cite{Gol10}. 
Let us note that a third construction of $c^3$-LTC, besides those of \cite{DEL21} and \cite{PK21}, was obtained by Lin and Hsieh \cite{LH22} through a method similar to that of \cite{PK21}, but relying on lossless expanders rather than spectral expanders.

\paragraph{Quantum LDPC codes.}

Defining the distance of a quantum code is slightly more complicated than in the classical case (where it is the minimum Hamming weight of a nonzero codeword). It is the minimum of two distances, $d_X$ and $d_Z$, that basically characterize how well the code behaves against the two possible types of errors occurring in the quantum case: $X$-type errors, also called bit flips (swapping the basis states $|0\rangle$ and $|1\rangle$) and $Z$-type errors, or phase flips (adding a phase $-1$ to the state $|1\rangle$ while acting trivially on $|0\rangle$). Recall that a quantum CSS code \cite{CS96,ste96} is defined by a pair of classical codes $\C_0 = \ker H_0, \C_1 = \ker H_1$ such that $\C_1 \supset \C_0^\perp$ (or equivalently $H_0\cdot H_1^T = 0$).
The distance of the code is then given by $d = \min (d_X, d_Z)$ with
\[ d_X = \min_{w \in \C_0 \setminus \C_1^{\perp}} |w|, \quad d_Z = \min_{w \in \C_1 \setminus \C_0^\perp} |w|.\]
It is worth noting that for a quantum LDPC code, the sparsity of $H_0$ and $H_1$
implies that both $\C_0^\perp$ and $\C_1^\perp$, and therefore $\C_0$ and $\C_1$
contain (many) words of constant weight. In particular, the codes $\C_0$ and $\C_1$ are not asymptotically good, quite the opposite!

The study of quantum LDPC codes arguably started around 1997 with the
paradigmatic surface code construction of Kitaev \cite{kit03} that encodes a
constant number of logical qubits into $n$ physical qubits and achieves a
distance of $\sqrt{n}$. Improving on this scaling turned out to be challenging:
despite an early construction of \cite{FML02} achieving $n^{1/2} \log^{1/4} n$
in 2002, no further progress was made on this question until 2020. In the
meantime, a major development was the idea of taking a special product of
classical codes \cite{TZ13}, which turned out to correspond to the tensor
product of chain complexes that represent the two classical codes, and yielded
quantum codes of contant rate and minimum distance $\Theta(\sqrt{n})$. Things
accelerated quickly in 2020 when the logarithmic dependence of \cite{FML02} was
first improved for constructions based on high-dimensional expanders
\cite{EKZ20, KT20}, and then much more decisively in a series of works
\cite{HHO20, PK20, EB21} introducing various combinations of chain complex products together with graph lifts. Already well known in the classical case, lifts turned out to be crucial to significantly break the $\sqrt{n}$ barrier on the distance of quantum LDPC codes. 
Finally, Panteleev and Kalachev proved the existence of asymptotically good quantum LDPC codes by considering non-abelian lifts of products of (classical) Tanner codes \cite{PK21}.

We also remark that it is possible to define quantum locally testable codes \cite{AE15}. The existence of such codes would have implications in Hamiltonian complexity, which is the quantum version of computational complexity. In particular, such codes would imply the NLTS conjecture formulated by Hastings \cite{has13,EH17}, and which is itself implied by the quantum PCP conjecture \cite{AAV13}.
Current constructions of quantum LTC are still very weak at the moment and far
from sufficient for such applications, however: they only encode a constant
number of logical qubits with a minimum distance bounded by $O(\sqrt{n})$
\cite{has17, LLZ21}.

A very fruitful approach to prove the existence of certain objects is the
probabilistic method and it is indeed very effective to prove that good
classical LDPC codes \cite{Gal62} and good quantum (non LDPC) codes
\cite{CS96,ste96} exist. This method has failed, however, to produce
$c^3$-LTCs or good quantum LDPC codes. On the one hand, it is well known that a
good LDPC code cannot be locally testable since removing a single constraint
will yield another good code and therefore a word violating this single
constraint will actually be far from the code; on the other hand, picking a good
LDPC code for $\C_0$ in the quantum case forces one to choose $\C_1^\perp$ to
contain words of large weight and $\C_1$ will then not be LDPC. For both
problems, it seems essential to enforce some minimal structure, and left-right
Cayley complexes have provided this fitting, long-awaited framework.

The paper is organised as follows. Section~\ref{sec:overview} gives an overview
of the paper, describes the quantum code construction, states the main theorem
and its consequences, with sketches of proofs. It is structured as a stand-alone extended
summary and concludes with some open
problems.
Section~\ref{sec:prelim} is a preliminary to the detailed part of the paper and
introduces the required technical material.
Section~\ref{sec:main} is the core of the paper, giving the detailed
construction of the quantum code, proving the main theorem and showing how it
implies a linear distance for the quantum code and also how it implies that the
Dinur \textit{et al.}\ code is locally testable.
An appendix is devoted to proving the required behaviour of random dual tensor codes.

\section{Overview}\label{sec:overview}

\paragraph{The left-right Cayley complex.}
Let us summarise the construction of the square complex of Dinur \textit{et al.}
\cite{DEL21}. It is an incidence structure $X$ between a set $V$ of vertices,
two sets of edges $E_A$ and $E_B$, that we will refer to as $A$-edges and
$B$-edges, and a set $Q$ of squares (or quadrangles). 
The vertex-set $V$ is defined from a group $G$: it will be useful for us that the
complex is bipartite, \textit{i.e.}\ the vertex set is partitioned as $V=V_0 \cup
V_1$, with $V_0$ and $V_1$ both identified as a copy of the group $G$. Formally,
we set $V_0=G\times\{0\}$ and $V_1=G\times\{1\}$. Next we have two self-inverse
subsets $A=A^{-1}$
and $B=B^{-1}$ of the group $G$: a vertex $v_0=(g,0)\in V_0$ and a vertex
$v_1=(g,1)$ are said to be related by an $A$-edge if $g'=ag$ for some $a\in A$.
Similarly, $v_0$ and $v_1$ are said to be related by a $B$-edge if $g'=gb$ for
some $b\in B$. The sets $E_A$ and $E_B$ make up the set of $A$-edges and
$B$-edges respectively. In other words, the graph $\G_A=(V,E_A)$ is the double cover
of the {\em left} Cayley graph $\Cay(G,A)$ and likewise $\G_B=(V,E_B)$ is the double
cover of the {\em right} Cayley graph  $\Cay(G,B)$.

Next, the set $Q$ of squares is defined as the set of $4$-subsets of vertices of the form
\[
\{(g,0),(ag,1),(gb,1),(agb,0)\}.
\]
A square is therefore made up of two vertices of $V_0,$ two vertices of $V_1$ as
represented on Figure~\ref{fig:square}.

\begin{figure}[h!]
\begin{center}
\begin{tikzpicture}
\draw[thick] (0,0) -- (0,2) -- (2,2) -- (2,0) -- (0,0);
\node[below left] at (0,0) {$g \in V_0$};
\node[above left] at (0,2) {$gb \in V_1$};
\node[above right] at (2,2) {$agb \in V_0$};
\node[below right] at (2,0) {$ag \in V_1$};
\end{tikzpicture}
\end{center}
\caption{Square of the complex, with edges $(g,ag), (agb, gb) \in E_A,
(g,gb), (agb, ag) \in E_B.$ \label{fig:square}
}
\end{figure}
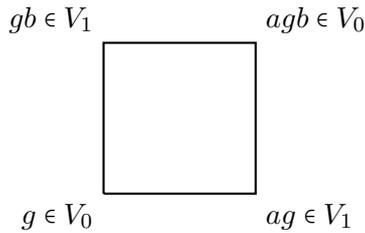

Let us remark that, if we restrict the vertex set to $V_0$, every square is now
incident to only two vertices (those in $V_0$). The set of squares can now be
seen as a set of edges on $V_0$, and it therefore defines a graph that we denote
by $\G_0^{\square}=(V_0,Q)$. Similarly, restricting to vertices of $V_1$ defines the
graph $\G_1^{\square}$, which is an exact replica of $\G_0^{\square}$: both graphs are
defined over a copy of the group $G$, with $g,g,'\in G$ being related by an edge
whenever $g'=agb$ for some $a\in A, b\in B$. We assume
for simplicity that $A$ and $B$ are of the same cardinality $\Delta$.

\paragraph{Tanner codes on the complex $X$.}
Recall the definition of a Tanner code, or expander code, on a graph, 
\cite{T81,SS96}. For $\G=(V,E)$ a regular graph of degree $n_0$ and $C_0$ a binary linear
code of length $n_0$, we define the Tanner code $T(\G,C_0)$ on $\f_2^E$ as the
set of binary vectors indexed by $E$ (functions from $E$ to $\f_2$), such that
on the edge neighbourhood of every vertex $v\in V$, we see a codeword of
$C_0$\footnote{This implies some identification, or map, between the edge
neighbourhood of each vertex and the coordinate set $[n_0]$ on which $C_0$ is
defined}. Sipser and Spielman's celebrated result is that if the graph $\G$ is
chosen from a family of $n_0$-regular expander graphs, and if the base code
$C_0$ has sufficiently large minimum distance and sufficiently large rate, then
the Tanner codes $T(\G,C_0)$ form a family of asymptotically good codes.

 Now for every vertex $v$ of the graph $\G_0^\square=(V_0,Q)$ (or of
$\G_1^\square$) associated to the square complex $X$, there is a natural identification\footnote{Formally, this identification is well-defined provided that the complex satisfies the \emph{Total No-Conjugacy} condition, see Section \ref{subsec:LRCayley} for a precise statement.} of its neighbourhood with the
product set $A\times B$. It therefore makes sense to consider codes $C_0$ on the
coordinate set $A\times B$ that are obtained from two small codes $C_A$ and
$C_B$ of length
$\Delta=|A|=|B|$, defined on coordinate sets $A$ and $B$, respectively.
We will refer to the restriction of an assignment $x \in \f_2^Q$ to the
$Q$-neighbourhood $Q(v)$ of some vertex $v \in V_0$ as the \emph{local view} of $x$ in $v$. The Tanner code construction therefore consists in constraining the local views of $x$ to belong to the code $C_0$.

\paragraph{The locally testable code of Dinur \textit{et al.}}
Let $C_0$ be defined as the tensor code $C_A\otimes C_B$ on the coordinate set
$A\times B$. In other words, this is the code such that for every fixed $b\in B$, we
see a codeword of $C_A$ on $\{(a,b), a\in A\}$, and for every fixed $a$ we see a
codeword of $C_B$ on $\{(a,b), b\in B\}$. The Tanner code $T(\G_0^\square,C_0)$
is exactly the locally testable code of Dinur \textit{et al.}\ \cite{DEL21}.
If $C_A = C_B$ is a linear code with parameters $[\Delta, \rho \Delta, \delta
\Delta]$, the resulting Tanner code $T(\G_0^\square,C_0)$ has length $\Delta^2
|G|/2$, a rate at least $2\rho^2 -1$ and is shown to have a normalized minimum
distance $\geq \delta^2  (\delta - \lambda/\Delta)$, where $\lambda$ is the
(common) second largest eigenvalue of the Cayley graphs $\Cay(G,A)$ and
$\Cay(G,B)$ \cite{DEL21}.

\paragraph{Two Tanner codes that define a quantum LDPC code.}
Besides the base code $C_0=C_A\otimes C_B$ that we have defined over $A\times B$, define the code
$C_1=C_A^\perp\otimes C_B^\perp$. 
Now consider the two Tanner codes $\C_0=T(\G_0^\square,C_0^\perp)$ and
$\C_1=T(\G_1^\square,C_1^\perp)$ that are defined over the same coordinate set $Q$.
We claim that this pair of codes $(\C_0,\C_1)$ satisfies the definition
of a quantum CSS code, namely that $\C_1\supset \C_0^\perp$. Note crucially that we now enforce constraints corresponding to the dual of a tensor code at each vertex.

This last fact is best seen by looking at the generators (in quantum coding
jargon) or parity-checks for these codes. Define a {\em $C_0$-generator} for $\C_0$ (resp.~a $C_1$-generator for
$\C_1$) as a vector of $\f_2^Q$ whose support lies entirely in the
$Q$-neighbourhood $Q(v)$ of a vertex $v$ of
$V_0$ (resp.~$V_1$), and which is equal to a codeword of $C_0$ (resp.~$C_1$) 
on $Q(v)$. (The codes $C_i$ and $\C_i$ should not be confused!) The code $\C_0$ (resp. $\C_1$) is by definition the space of vectors
orthogonal to all $C_0$-generators ($C_1$-generators). Now consider a $C_0$-generator
and a $C_1$-generator on vertices $v_0$ and $v_1$. If the generators have
intersecting supports then the vertices $v_0$ and $v_1$ must be neighbours. If
they are connected by a $B$-edge, then their $Q$-neighbourhoods $A\times B$
share an $A$-set $\{(a,b), a\in A\}$ for a fixed $b$, on which the $C_0$-generator must equal a codeword of $C_A$ and the $C_1$-generator must equal a
codeword of $C_A^\perp$. The two generators must therefore be orthogonal to
each other. We reach the same conclusion analogously if $v_0$ and $v_1$ are
connected by an $A$-edge. 

For reasons of symmetry, we will wish the base codes $C_0$ and $C_1$ to have the
same rate: we will require $C_A$ to have some rate $\rho$ and $C_B$ to have rate
$1-\rho$. In this case, the length of the quantum code $\eQ=(\C_0,\C_1)$ is given by the number of squares in the complex, namely $\Delta^2 |G|/2$, and the number of parity check constraints is $2 \rho(1-\rho) \Delta^2 |G|$. We conclude that the quantum code has rate at least $(2\rho-1)^2$ which is non-zero for every $\rho\neq 1/2$. 

We will show that under the right conditions for the choice of the left-right
Cayley complex $X$ and the component codes $C_A$ and $C_B$, we obtain an
asymptotically good family of quantum codes $\eQ=(\C_0,\C_1)$. The required
conditions on the choice of $X$ are the same as in \cite{DEL21}, namely that the
two Cayley graphs $\Cay(G,A)$ and $\Cay(G,B)$ should be Ramanujan graphs or
almost Ramanujan graphs, \textit{i.e.}\ with a second largest eigenvalue $\lambda\leq c\sqrt{\Delta}$ for some
small constant $c$, and that a non-degeneracy condition (called Total No-Conjugacy or TNC) is satisfied by the
sets $A,B$, ensuring that for all choices of $g \in G, a \in A$ and $b\in B$, it holds that $ag \ne gb$.

It is worth noting that the good asymptotic properties of classical expander
codes follow as soon as the distance of the small code is larger than the second eigenvalue of the expander graph, 
since the normalized minimum distance is known to be at least $\delta (\delta-
\lambda/\Delta) >0$. In the case we are considering however,
the distance of the small code is upper bounded by $\Delta$ while the second eigenvalue scales like $2\Delta$. It is therefore necessary to inspect more closely the structure of the small dual codes $C_A \otimes \f_2^B + \f_2^A \otimes C_B$ in order to get a nontrivial bound on the distance of the quantum Tanner code.

\paragraph{Robustness of the component codes $C_A,C_B$.}
The required conditions for the codes $C_A,C_B$ are that the minimum distances of
$C_A,C_B,C_A^\perp$ and $C_B^\perp$ are all sufficiently large, and that
the two dual tensor codes
$(C_A\otimes C_B)^\perp$ and $(C_A^\perp\otimes C_B^\perp)^\perp$ are sufficiently
{\em robust.} Viewing the sets $A$ and $B$ as the row and column sets
respectively of $\Delta\times\Delta$ matrices, let us say that a dual tensor code $(C_A^\perp\otimes
C_B^\perp)^\perp=C_A\otimes\f_2^B + \f_2^A\otimes C_B$ is
$w$-robust if any codeword $x$ of weight $\leq w$ has its support
included in the union of $|x|/d_A$ columns and $|x|/d_B$ rows, where $d_A$ and
$d_B$ are the minimum distances of $C_A$ and $C_B$. A similar notion
is used both in \cite{DEL21} and \cite{PK21}. In particular, $w$-robustness is equivalent
to a notion of robustness for the tensor code $C_A\otimes C_B$, which loosely
speaking, says that if a vector of $\f_2^{A\times B}$ is close to the
column code $C_A\otimes\f_2^B$ as well as to the row code
$\f_2^A\otimes C_B$, then it must also be close to the tensor code
$C_A\otimes C_B$. We shall be more precise with this notion later on.

To obtain asymptotically good quantum codes, we are however not quite able to
prove the existence of component codes $C_A,C_B$ that yield robust dual tensor
codes for a large enough parameter $w$. To overcome this problem, following
\cite{PK21}, we introduce the following tweak: recall that if $C$ is a code
defined on the coordinate set $S$, and $T \subset S$ is a
subset of $S$, then we may define the punctured code that we will denote
by $(C)_T$, as the set of codewords of $C$ restricted the set of coordinates
$T$. Let us say that the dual tensor code
$C_A\otimes\f_2^B +\f_2^A\otimes C_B$ 
has $w$-robustness {\em with resistance to puncturing} $p$, if, for any subsets $A'
\subset A$ and $B' \subset B$ of cardinality $\geq \Delta - w'$, $w'\leq p$,
it remains 
$w$-robust when punctured outside of the set $A'\times B'$. Equivalently,
if the dual tensor code obtained from the punctured codes $(C_A)_{A'}$ and $(C_B)_{B'}$
is $w$-robust.

Using the method of \cite{PK21}, we will obtain that for any $\eps \in (0,1/2)$
and $\gamma\in (1/2+\eps,1)$,
and for random pairs of codes $C_A,C_B$ of given fixed rates, the associated
dual tensor code $C_A\otimes\f_2^B +\f_2^A\otimes C_B$ is,  with high
probability, $w$-robust with resistance to puncturing $w'$, for
 $w = \Delta^{3/2-\eps}$ and $w'=\Delta^\gamma$.

Our main technical result will now take the following form.

\begin{restatable}{theo}{main}
\label{thm:main} Fix $\eps \in (0,1/2)$, $\gamma\in (1/2+\eps,1)$ and $\delta>0$. 
For any fixed large enough $\Delta$, if the component codes $C_A$ and
$C_B$ have minimum distance $\geq\delta \Delta$ and 
if the dual tensor code $C_A\otimes\f_2^B +\f_2^A\otimes C_B=C_1^\perp$ 
is $w$-robust with $p$-resistance to puncturing for $w = \Delta^{3/2-\eps/2}$ and
$p=\Delta^{\gamma}$, then there exists an infinite family of square complexes $X$ for which the Tanner code
$\C_1=T(\G_1^\square,C_1^\perp)$ 
of length $n=|Q|$ has the following property: 

for any codeword $x\in\C_1$ of non-zero weight $< \delta n/4\Delta^{3/2+\eps}$, there exists a vertex
$v\in V_0$, and a codeword $y$ of $\C_1$ entirely supported by the $Q$-neighbourhood
of $v$, on which it coincides with a codeword of the tensor code
$C_A\otimes C_B$, and such that $|x+y|<|x|$.
\end{restatable}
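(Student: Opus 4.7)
\emph{Reduction.} A first observation is that the three conditions ``$y\in\C_1$, $\supp(y)\subseteq Q(v)$, $y|_{Q(v)}\in C_A\otimes C_B$'' collapse to the single requirement that $y_{v_0}:=y|_{Q(v_0)}$ is a nonzero tensor codeword for some $v_0\in V_0$. Indeed, if $y$ is extended by zero outside $Q(v_0)$, then at every $V_1$-neighbour $v_1'$ of $v_0$ the restriction $y|_{Q(v_1')}$ is supported on a single row or column of $v_1'$'s local view; under the natural $A\times B$ identification available via TNC this corresponds to a $C_B$-coded row or $C_A$-coded column of $v_0$'s view, and a one-row $C_B$-matrix lies in $\f_2^A\otimes C_B\subseteq C_1^\perp$ (using $d_A,d_B\geq 2$ to rule out degenerate alternative decompositions). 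Hence $y\in\C_1$ automatically, and the theorem reduces to finding $v_0\in V_0$ and a nonzero $y_{v_0}\in C_A\otimes C_B$ with $|x_{v_0}+y_{v_0}|<|x_{v_0}|$, where $x_{v_0}:=x|_{Q(v_0)}$.

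\emph{Structural skeleton from robustness at $V_1$.} The condition $x\in\C_1$ forces $x_{v_1}\in C_1^\perp$ at every $v_1\in V_1$. By $w$-robustness with $w=\Delta^{3/2-\eps}$, whenever $|x_{v_1}|\leq w$ the support of $x_{v_1}$ lies in a union of at most $|x_{v_1}|/d_A$ columns and $|x_{v_1}|/d_B$ rows, giving a natural decomposition $x_{v_1}=P_{v_1}+Q_{v_1}$ with $P_{v_1}\in C_A\otimes\f_2^B$ supported on a small set of columns (each a codeword of $C_A$) and $Q_{v_1}\in\f_2^A\otimes C_B$ supported on a small set of rows (each a codeword of $C_B$). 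Since $\sum_{v_1}|x_{v_1}|=2|x|\leq\delta|G|\Delta^{1/2-\eps}/4$, only a vanishing fraction of $V_1$-vertices fail $|x_{v_1}|\leq w$, and on those the resistance-to-puncturing with parameter $p=\Delta^\gamma$ salvages the argument after excising the densest rows/columns. Each such column-piece at $v_1$ identifies, via the natural edge correspondence, with a column of the adjacent $V_0$-view (still a $C_A$-codeword); each row-piece similarly identifies with a row of a neighbouring $V_0$-view. Aggregated, this exhibits $x_{v_0}$, at every $v_0\in V_0$, as close both to the column code $C_A\otimes\f_2^B$ and to the row code $\f_2^A\otimes C_B$, with a quantitative bound controlled by $\sum_{a}|P_{v_1^a}|/d_A + \sum_{b}|Q_{v_1^b}|/d_B$ summed over the $A$- and $B$-neighbours $v_1^a, v_1^b$ of $v_0$.

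\emph{Finding $v_0$ and constructing $y_{v_0}$.} By the equivalence, recalled in the overview, between $w$-robustness of the dual tensor code $C_1^\perp$ and a tensor-robustness property of $C_A\otimes C_B$, closeness of $x_{v_0}$ to both the column and row codes transfers to closeness to $C_A\otimes C_B$ itself. A double-counting / expander-mixing argument, exploiting the spectral gap $\lambda\leq c\sqrt{\Delta}$ of $\Cay(G,A)$ and $\Cay(G,B)$, then produces a $v_0$ at which $x_{v_0}$ is substantial enough---of weight at least on the order of $d_A d_B$---that the nearest nonzero tensor codeword $y_{v_0}\in C_A\otimes C_B$ is strictly closer to $x_{v_0}$ than the zero vector, yielding $|x_{v_0}+y_{v_0}|<|x_{v_0}|$. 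The hardest step is this expansion-based pigeonhole: merely counting the total weight of pieces does not force the existence of a $v_0$ with a heavy and well-structured $x_{v_0}$, and only the spectral expansion of the Cayley graphs rules out adversarial configurations where the mass of $x$ is smeared over $V_0$ in a way that avoids any tensor-codeword alignment. The quantitative threshold $|x|<\delta n/(4\Delta^{3/2+\eps})$, together with the choices $w=\Delta^{3/2-\eps}$ and $p=\Delta^\gamma$ with $\gamma>1/2+\eps$, are calibrated precisely so that this expansion-based argument closes for $\Delta$ sufficiently large.
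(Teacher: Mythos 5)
Your outline follows the same route as the paper (robustness of the local views at $V_1$ forces each shared row/column to be $O(\Delta^{1/2-\eps}/\delta)$-close to $C_B$ or $C_A$; tensor-robustness via Proposition~\ref{prop:tensor} then converts closeness to the row and column codes at a $V_0$-vertex into closeness to $C_A\otimes C_B$; puncturing absorbs the exceptional $V_1$-vertices; and the reduction of the three conditions on $y$ to exhibiting a single nonzero local tensor codeword is correct, since such a $y$ is a $C_0$-generator and hence lies in $\C_0^\perp\subset\C_1$). But the step you yourself label as the hardest --- producing a vertex $v_0\in V_0$ whose local view is heavy enough that the nearby tensor codeword is nonzero and strictly improves the weight --- is asserted, not proved, and it is precisely the content of the paper's argument. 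Concretely, what is missing is: (i) the definition of \emph{heavy edges} of $\G^\cup$ (edges incident to $\geq \delta\Delta-\Delta^{1/2-\eps}/\delta$ squares of $x$) and of the set $T\subset V_0$ of vertices receiving a heavy edge from a normal vertex of $S$; (ii) the expander mixing lemma applied to $E(S,T)$ in $\G^\cup$ (degree $2\Delta$, $\lambda(\G^\cup)\leq 4\sqrt{\Delta}$ by Lemma~\ref{lem:lambda} --- not directly in $\Cay(G,A)$ or $\Cay(G,B)$ as you suggest), which yields $|T|\leq 64|S|/(\delta^2\Delta)$; and (iii) the averaging step: since each of the $\geq|S|-|S_e|$ normal vertices sends a heavy edge into $T$, a constant fraction of $T$ receives $\Omega(\Delta)$ heavy edges, each contributing a near-$C_A$ or near-$C_B$ row or column of weight $\approx\delta\Delta$, so that $|x_{v_0}|=\Omega(\Delta^2)$ dominates the $O(\Delta^{3/2+\eps})$ distance to the tensor codeword. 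Without (i)--(iii) the conclusion $|x+y|<|x|$ does not follow.

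A second, quantitative gap is the treatment of exceptional vertices. Saying that resistance to puncturing ``salvages the argument after excising the densest rows/columns'' presupposes that the chosen $v_0$ shares at most $\Delta^\gamma$ rows and columns with exceptional vertices of $S$; this is not automatic and requires a separate counting argument (the paper bounds $E(S_e,T)\leq\beta\Delta^{1/2+\eps}|T|$ via a second application of the mixing lemma together with the bound $|S_e|\leq 64|S|/\Delta^{1-2\eps}$ obtained from mixing in $\G_1^\square$, and then a Markov argument to find a $v_0$ that is simultaneously heavy and has at most $O(\Delta^{1/2+\eps})$ exceptional neighbours). This is also the only place where the hypothesis $\gamma>1/2+\eps$ is actually used, so its role needs to be made explicit rather than described as ``calibration''.
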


We recall from \cite{DEL21} that
there exists an infinite sequence of degrees $\Delta$ (namely $q+1$, for $q$ an
odd prime power), such that for each fixed degree $\Delta$, there exists an infinite
family of left-right Cayley complexes (over the groups $G=\PSL_2(q^i)$),
satisfying the TNC condition, for which both the (left) Cayley graph $\Cay(G,A)$
and the (right) Cayley graph $\Cay(G,B)$ are Ramanujan graphs. These complexes
provide the infinite families of Theorem~\ref{thm:main}.
Let us also mention that randomly chosen codes $C_A$ and $C_B$ will typically achieve the requirements of the above theorem. In the quantum case, the codeword $y$ will in fact belong to $\C_0^\perp$.

\paragraph{Sketch of proof of Theorem~\ref{thm:main}.}

Let $x \in \C_1$ be a codeword of sufficiently low weight. It induces a subgraph $\G_{1,x}^\square$ of $\G_1^\square$
 with vertex set $S \subset V_1$. The local view for any $v \in S$ corresponds
to a codeword of $C_1^\perp = C_A\otimes\f_2^B +\f_2^A\otimes C_B$. The
$w$-robustness of this code guarantees that codewords of weight less than $w =
\Delta^{3/2-\eps}$ have a support restricted to a small number of rows and
columns: in particular, and this is the crucial consequence of the robustness
property, it implies that the view restricted to any column (or row) is at
distance $O(\Delta^{1/2-\eps})$ from a word of $C_A$ (or $C_B$).

Let us call \emph{normal vertices} the vertices of $S$ with degree less than $\Delta^{3/2-\eps}$ in $\G_{1,x}^\square$. 
Expansion in $\G_1$ ensures that the set of exceptional (\textit{i.e.}\ not normal) vertices is small compared to $S$ and we will neglect it in this sketch. Dealing with exceptional vertices, however, is slightly technical since their number is not \emph{that} small, and this is the reason why we will require robust codes that are resistant to puncturing (in order to discard rows or columns belonging to an exceptional vertex).

Define $T \subset V_0$ to be vertices of $V_0$ sharing a column (or a row) with a normal vertex in $S$, 
and such that the local view on this column (or row) is close to a nonzero codeword of $C_A$ (or $C_B$).
 The global codeword $x$ induces a subgraph of $\G_A\cup\G_B$ in which the
vertices of $T$ have a degree $\Omega(\Delta)$.
Expansion in the graph $\G_A \cup \G_B$ then implies that when $|x|$ is too
small, there can't be too many vertices in $T$, which in turn implies that a typical vertex in $T$ must be adjacent to a large number 
$\Omega(\Delta)$ of vertices in $S$. In other words, the rows and columns that
are close to codewords of $C_A$ and $C_B$ in the local views of normal vertices
of $S$, must cluster around the vertices of $T$. So the local view of a typical
vertex of $T$
consist of many columns (or rows) containing almost undisturbed codewords of
$C_A$ (or $C_B$). 
The robustness of $C_A \otimes C_B$ then implies that the local view of such a
vertex must be $\Delta^{3/2-\eps}$-close to a codeword $y$ of the tensor code, which
cannot be the zero codeword since the local view has weight $\Omega(\Delta^2)$.
Adding this codeword will decrease the weight of~$x$. 

\paragraph{Asymptotically good quantum codes.}
A straightforward consequence of Theorem~\ref{thm:main} is that the quantum code
$\eQ=(\C_0,\C_1)$ described above has constant rate and minimum distance linear
in its length $n$. 
\begin{theo}\label{thm:good-qLDPC}
Let $X$ be the infinite family of square complexes from Theorem \ref{thm:main}.
For any $\rho \in (0,1/2)$, $\eps\in(0,1/2)$ and $\delta>0$ satisfying
 $-\delta \log_2 \delta - (1-\delta)\log_2(1-\delta) < \rho,$
randomly choosing $C_A$ and $C_B$ of rates
$\rho$ and $1-\rho$ yields, with probability $>0$ for $\Delta$ large enough,
an infinite sequence of quantum codes $\eQ=(\C_0,\C_1)$ of rate $(2\rho-1)^2$,
length $n$ 
and minimum distance $\geq \delta n/4 \Delta^{3/2+\eps}$.
\end{theo}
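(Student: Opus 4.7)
The plan is to derive Theorem~\ref{thm:good-qLDPC} as a direct packaging of Theorem~\ref{thm:main} with a union bound on the random choice of $C_A$ and $C_B$. The length $n=|Q|=\Delta^2|G|/2$ and rate lower bound $(2\rho-1)^2$ have already been computed in the overview from the counts of squares and parity-check constraints, so only the distance bound needs an argument.

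First I would check that, for independent uniformly random linear codes $C_A,C_B$ of rates $\rho$ and $1-\rho$, all hypotheses of Theorem~\ref{thm:main} hold simultaneously \emph{for both sides of the CSS pair} with positive probability. The Gilbert--Varshamov bound combined with the hypothesis $-\delta\log_2\delta-(1-\delta)\log_2(1-\delta)<\rho$ (which is strictly stronger than the analogous condition with $1-\rho$, since $\rho<1/2$) guarantees that each of $C_A$, $C_B$, $C_A^\perp$, $C_B^\perp$ has minimum distance $\geq\delta\Delta$ with probability $1-o(1)$ as $\Delta\to\infty$; here I use that the dual of a uniformly random linear code of rate $r$ is again a uniformly random linear code of rate $1-r$. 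The appendix shows that both dual tensor codes $C_1^\perp=C_A\otimes\f_2^B+\f_2^A\otimes C_B$ and $C_0^\perp=C_A^\perp\otimes\f_2^B+\f_2^A\otimes C_B^\perp$ are $w$-robust with $p$-resistance to puncturing for $w=\Delta^{3/2-\eps}$ and $p=\Delta^{\gamma}$, again with probability $1-o(1)$. A union bound then yields, for $\Delta$ large enough, a pair $(C_A,C_B)$ satisfying all four properties simultaneously.

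Fix such $(C_A,C_B)$ and let $\eQ=(\C_0,\C_1)$ be the quantum code built on a square complex from the family provided by Theorem~\ref{thm:main}. To bound $d_Z$, let $x\in\C_1$ with $0<|x|<\delta n/4\Delta^{3/2+\eps}$. Theorem~\ref{thm:main} produces a vertex $v\in V_0$ and a codeword $y\in\C_1$ supported on $Q(v)$, coinciding there with a codeword of $C_0=C_A\otimes C_B$, and satisfying $|x+y|<|x|$. By definition $y$ is a $C_0$-generator, hence $y\in\C_0^\perp$; so $x+y\in\C_1$ lies in the same coset of $\C_0^\perp$ as $x$ but is strictly lighter. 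Since the weights are non-negative integers and the strict-decrease preserves the threshold, iterating forces the weight down to $0$, which is trivially in $\C_0^\perp$. Thus the starting $x$ was already in $\C_0^\perp$, proving $d_Z\geq\delta n/4\Delta^{3/2+\eps}$.

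For $d_X$ I would invoke the symmetry of the construction: exchanging $V_0\leftrightarrow V_1$ and $(C_A,C_B)\leftrightarrow(C_A^\perp,C_B^\perp)$ swaps $\C_0$ and $\C_1$, and the hypotheses of Theorem~\ref{thm:main} continue to hold on this swapped side thanks to the union bound of the second paragraph. Rerunning the same iterated reduction then yields $d_X\geq\delta n/4\Delta^{3/2+\eps}$, so $d=\min(d_X,d_Z)$ matches the claimed bound. The only real subtlety is to remember to control all four minimum distances and \emph{both} dual tensor codes simultaneously, so that the symmetric version of Theorem~\ref{thm:main} is also available; once that is secured, the rest of the deduction is essentially bookkeeping.
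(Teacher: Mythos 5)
Your proposal is correct and follows essentially the same route as the paper: it separates the argument into a deterministic part (iterating Theorem~\ref{thm:main} on $\C_1$, noting $y$ is a $C_0$-generator hence in $\C_0^\perp\subset\C_1$, and invoking the $\C_0$/$\C_1$ symmetry for the other distance) and a probabilistic part (Gilbert--Varshamov for the four minimum distances, Theorem~\ref{thm:P-robust} for both dual tensor codes, and a union bound), which is exactly how the paper organizes Theorems~\ref{thm:123} and~\ref{thm:goodcodes}. Your observation that the dual of a randomly generated code has the right distribution is precisely the paper's device of picking one code by a uniform generator matrix and the other by a uniform parity-check matrix so that $(C_A^\perp,C_B^\perp)$ is distributed like $(C_A,C_B)$.
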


\noindent
For a more precise statement see Theorems~\ref{thm:123} and \ref{thm:goodcodes}.

\begin{proof}[Sketch of proof of Theorem~\ref{thm:good-qLDPC}]
Recall that the minimum distance of a quantum code $(\C_0,\C_1)$ is the smallest
weight of a vector that is either in $\C_0\setminus \C_1^\perp$ or in
$\C_1\setminus C_0^\perp$. The vector $y$ in Theorem~\ref{thm:main} is in
$\C_0^\perp$, so by applying repeatedly the existence a such a $y$, we obtain
that any codeword $x$ of $\C_1$ of weight $<\delta n/4\Delta^{3/2+\eps}$ must belong to
$\C_0^\perp$. To similarly bound from below the weight of a vector in $\C_0$ but
not in $\C_1^\perp$, one must apply Theorem~\ref{thm:main} to the code $\C_0$
instead of $\C_1$, which just means that we need to ensure that the distance and
robustness properties
required of $C_A$ and $C_B$ are also satisfied by $C_A^\perp$ and $C_B^\perp$.
We choose $C_A$ and $C_B$ randomly by picking a uniform random parity-check matrix for
one code and a uniform random generator matrix for the other, so that properties
typically satisfied by the pair $C_A,C_B$ will also be satisfied by
the pair $C_A^\perp,C_B^\perp$.
\end{proof}

\paragraph{Recovering the local testability of the construction of~\cite{DEL21}.}

Recall that the tester picks randomly a vertex $v\in V_0$ and checks whether the
local view $x_v$ of the input vector $x\in\f_2^Q$ belongs to the small tensor
code $C_0 = C_A \otimes C_B$. Proving local testability amounts to proving that
the distance $d(x,\C)$ of $x$ to the LTC $\C$ is always proportional to the
size $|S|$ of the subset $S\subset V_0$ of vertices for which $x_v$ is not a
codeword of $C_0$. To this end we consider the collection $\mathcal{Z}=(c_v)_{v\in V_0}$
of all the closest $C_0$-codewords to the local views $x_v$ of $x$. Conflating
the small vector $c_v\in C_0$ with a vector in $\f_2^Q$ that equals $c_v$ on the
$Q$-neighbourhood of $v$ and is zero elsewhere, we then
define the vector $z=\sum_{v\in V_0}c_v\in\f_2^Q$ that we call the mismatch of the
collection $\mathcal{Z}$. If we have $z=0$ then $\mathcal{Z}$ is the
collection of local views of a global codeword $c\in \C$, and its distance to
$x$ must be proportional (up to a $\Delta^2$ factor) to $|S|$. Otherwise the key
observation is that the local views of the mismatch $z$ at all the vertices of
$V_1$, must consist of codewords of $C_A\otimes\f_2^B+\f_2^A\otimes C_B$, in
other words $z$ must belong to the code $\C_1$ of Theorem~\ref{thm:main}.
The same Theorem~\ref{thm:main} then states that there exists a vector $y_v\in\f_2^Q$,
that is equal to a codeword of $C_0$ on the $Q$-neighbourhood of some vertex
$v\in V_0$, and such that $|z+y_v|<|z|$. Consequently, if one replaces $c_v$ by
$c_v+y_v$ in the collection $\mathcal{Z}$, one reduces the weight of the
mismatch, and by repeatedly applying the procedure (which we can think of as
decoding the mismatch), we obtain a updated list $\mathcal{Z}$ than coincides
with the set of local views of a codeword of $\C$ whose distance to $x$ is again
easily shown to be proportional to the size of the original set $S$.

\paragraph{Comments and open problems.}

A natural follow up problem is to devise an efficient decoding algorithm for
quantum Tanner codes. At the moment, to the best of our knowledge, the only
efficient decoder for quantum codes that corrects adversarial errors of weight
larger than $\sqrt{n}$ is that of \cite{EKZ20} which can correct errors of
weight $\Omega(\sqrt{n} \log n)$. 

While the construction of quantum Tanner codes described above is arguably conceptually simpler than the balanced product and lifted product constructions of \cite{BE21b, PK21}, it comes at the price of larger weights for the generators, namely $\Theta(\Delta^2)$ instead of $\Theta(\Delta)$. Even if they are constant, it would be very useful to decrease these weights as much as possible and it would be interesting to explore how the weight reduction technique of Hastings \cite{has21} can help on this issue. 

It would be naturally very interesting to find other complexes, besides
left-right Caley complexes, on which the Tanner construction can be applied to
yield good families of quantum LDPC codes.

It  would also be desirable to have completely explicit constructions.
Reed-Solomon codes (binarised versions) come close, but tensor codes of
Reed-Solomon codes are only known to have the required robustness when the sum
of the rates of the component codes is $<1$ \cite{PS94}, which is insufficient
for the above constructions.
In a similar vein, we remark that if one could improve the robustness of the
component dual tensor codes to values above $\Delta^{3/2}$, we could improve the
dependence on $\Delta$ of the relative minimum distance of the quantum code, potentially up
to $\Omega(1/\Delta)$. 
We also remark that we cannot hope to improve the dependence on
$\Delta$ of the relative minimum distance above $O(1/\Delta)$. Indeed, 
we will  prove the existence of words of
$\C_1\setminus \C_0^\perp$ of weight less than $n/\Delta$ (see Section~\ref{sec:goodLDPC}
 for details). This shows that there is little room for improvement in our
estimation of the quantum code minimum distance.

\paragraph{Acknowledgements.}
We would like to thank Benjamin Audoux, Alain Couvreur, Shai Evra, Omar Fawzi,
Tali Kaufman, Jean-Pierre Tillich, and Christophe Vuillot for many fruitful discussions on quantum codes over the years. We also thank Max Hopkins for spotting a technical error in the application of Proposition~\ref{prop:tensor} in an earlier version of this manuscript.
We acknowledge support from the Plan France 2030 through the project ANR-22-PETQ-0006. AL acknowledges support from the ANR through the QuantERA project QCDA, and GZ acknowledges support from the ANR through the project QUDATA, ANR-18-CE47-0010.


\section{Preliminaries}\label{sec:prelim}
\subsection{Expander Graphs}
Let $\G=(V,E)$ be a graph. Graphs will be undirected but may have multiple
edges. 
For $S,T \subset V$, let $E(S,T)$ denote the multiset of edges with one endpoint in
$S$ and one endpoint in $T$\footnote{with the convention that an edge with both
endpoints in $S\cap T$ appears twice in $E(S,T)$}.
Let $\G$ be a $\Delta$-regular graph on $n$ vertices, and let
$\Delta=\lambda_1\geq\lambda_2\geq \ldots \geq \lambda_n$ be the eigenvalues of the
adjacency matrix of $\G$. For $n\geq 3$, we define $\lambda(\G):= \max\{|\lambda_i|,
\lambda_i\neq \pm \Delta\}$. 
The connected graph $\G$ is said to be Ramanujan if $\lambda(\G)\leq 2\sqrt{\Delta-1}$.

We recall the following version of the expander mixing lemma (see e.g. \cite{HLW06}).
\begin{lemma}[Expander mixing lemma] \label{lem:mixing}
For a $\Delta$-regular non-bipartite, connected graph $\cG$ and any sets $S, T \subset V(\cG)$, it holds that
\[ |E(S,T) | \leq \frac{\Delta}{|V|} |S| |T| + \lambda(\G) \sqrt{|S| |T|}.\]
For a $\Delta$-regular bipartite connected graph $\cG$ over the vertex set $V=V_0\cup V_1$
and any sets $S\subset V_0$, $T\subset V_1$, it holds that 
\[ |E(S,T) | \leq \frac{\Delta}{|V_0|} |S| |T| + \lambda(\G) \sqrt{|S| |T|}.\]

\end{lemma}

\subsection{Left-right Cayley complexes.}
\label{subsec:LRCayley}

A {\em left-right Cayley complex} $X$ is introduced in \cite{DEL21} from a group
$G$ and two sets of generators $A=A^{-1}$ and $B=B^{-1}$. As in \cite{DEL21} we
will restrict ourselves, for the sake of simplicity, to the case
$|A|=|B|=\Delta$. The complex is made up of vertices, $A$-edges, $B$-edges, and
squares. The vertex set is $G$,
the $A$-edges are pairs of vertices of the form $\{g,ag\}$ and $B$-edges are of
the form $\{g,gb\}$ for $g\in
G,a\in A,b\in B$. A {\em square} is a set of group elements of the form $\{g,ag,gb,agb\}$.
The {\em Total No-Conjugacy} condition (TNC) requires that
\begin{align}\label{eqn:TNC}
\forall a \in A, b \in B, g \in G, \quad ag \ne gb.
\end{align}
This condition ensures that a square contains exactly $4$ distinct vertices and
that every vertex is incident to exactly $\Delta^2$ squares. For a vertex $v$, the set of incident squares is called the \emph{link} of $v$, and denoted $X_v$. The TNC condition implies that the link of any vertex is in bijection with the set $A \times B$ (see Claim 3.7 of \cite{DEL21}). 
We will naturally refer to sets of the form $\{a\} \times B$ as rows and sets $A \times \{b\}$ as columns. 

We recall from \cite{DEL21} that there exists an infinite sequence of degrees
$\Delta$ (namely $q+1$, for $q$ an odd prime power) such that each fixed degree $\Delta$, there exists an infinite
family of left-right Cayley complexes (over the groups $G=\PSL_2(q^i)$), satisfying the TNC condition, for which
both the (left) Cayley graph $\Cay(G,A)$ and the (right) Cayley graph
$\Cay(G,B)$ are Ramanujan graphs.

When the Cayley graphs $\Cay(G,A)$ and 
$\Cay(G,B)$ are not bipartite (as is the case of the above
family\footnote{Thanks to Shai Evra for spelling this out.}), it will be convenient for us to make them so by
replacing them by their double covers. So we make two copies $V_0=G\times\{0\}$ and 
$V_1=G\times\{1\}$ of $G$ and define the graphs $\G_A=(V=V_0\cup V_1,E_A)$ and
$\G_B=(V,E_B)$
with the edge set $E_A$ made up of pairs $\{(g,0),(ag,1)\}$, $a\in A$,  and $E_B$ consisting
of the pairs $\{(g,0),(gb,1)\}$, $b\in B$.

Finally, the set $Q$ of squares of the square complex $X$ is defined as the set of $4$-subsets of vertices of the form
\[
\{(g,0),(ag,1),(gb,1),(agb,0)\}.
\]

Let us introduce two further graphs that exist on the complex $X$. The first is
just the union of $\G_A$ and $\G_B$, and we denote it by $\G^\cup = (V,E_A\cup
E_B)$. The second graph we denote by $\G^\square=(V,E^\square)$: it puts an edge
between all pairs of vertices of the form $\{(g,i),(agb,i)\}$, $g\in G,a\in A,b\in
B, i=0,1$. The graph $\G^\square$ is therefore made up of two connected
components, on $V_0$ and $V_1$, that we denote by $\G_0^\square$ and
$\G_1^\square$. We note that $\G^\square$ is regular of degree $\Delta^2$, and
may have multiple edges. 

If $\G_A$ and $\G_B$ are Ramanujan, then $\G^\cup$ and $\G^\square$ inherit most
of their expansion properties. Specifically:

\begin{lemma}\label{lem:lambda}
Assume that $\G_A$ and $\G_B$ are Ramanujan graphs, then 
\[ \lambda(\G^\cup) \leq 4\sqrt{\Delta}, \quad
\lambda(\G_0^\square) \leq 4\Delta, \quad \lambda(\G_1^\square)\leq 4\Delta.\]
\end{lemma}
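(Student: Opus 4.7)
The plan is to derive all three bounds from the fact that the adjacency matrices of $\G_A$ and $\G_B$ share the same ``trivial'' eigenspaces, combined with a rank-one-plus-noise decomposition of their biadjacency matrices.

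For $\G^\cup = (V, E_A\cup E_B)$: both $\G_A$ and $\G_B$ are bipartite $\Delta$-regular graphs on the same partition $V = V_0 \cup V_1$, so their adjacency matrices $A_A$ and $A_B$ share the two-dimensional trivial eigenspace spanned by $\mathbf{1}_V$ (eigenvalue $+\Delta$) and $\mathbf{1}_{V_0} - \mathbf{1}_{V_1}$ (eigenvalue $-\Delta$). Restricting to its orthogonal complement $W$, the Ramanujan hypothesis gives $\|A_A|_W\|, \|A_B|_W\| \leq 2\sqrt{\Delta-1}$, so the triangle inequality yields $\|(A_A+A_B)|_W\| \leq 4\sqrt{\Delta-1} \leq 4\sqrt{\Delta}$. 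The two eigenvalues $\pm 2\Delta$ on the trivial subspace are exactly the ones excluded from $\lambda(\G^\cup)$ by definition.

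For $\G_0^\square$ (and $\G_1^\square$ by the same argument), I would first identify its adjacency matrix algebraically. Let $M_A$ and $M_B$ be the $G \times G$ biadjacency matrices of $\G_A$ and $\G_B$ (so $M_A[g,h]=1$ iff $h=ag$ for some $a\in A$, and $M_B[g,h]=1$ iff $h=gb$ for some $b\in B$). A direct count using $B=B^{-1}$ gives
\[
(M_A M_B^T)[g,g'] \;=\; |\{(a,b)\in A\times B : ag = g'b\}| \;=\; |\{(a,b)\in A\times B : g'=agb\}|,
\]
which is the number of squares of $X$ joining $(g,0)$ and $(g',0)$. Hence $A_{\G_0^\square} = M_A M_B^T$, symmetric thanks to the analogous identity $M_A M_B^T = M_B M_A^T$. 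Next, decompose $M_A = \tfrac{\Delta}{|G|}J + N_A$, where $J$ is the all-ones matrix: the singular vectors of the top singular value $\Delta$ of $M_A$ are both proportional to $\mathbf{1}$, so $N_A$ has zero row and column sums and $\|N_A\| \leq 2\sqrt{\Delta-1}$ by the Ramanujan hypothesis. The analogous splitting of $M_B$, together with $N_A J = 0 = J N_B^T$, yields
\[
A_{\G_0^\square} \;=\; \frac{\Delta^2}{|G|}\, J + N_A N_B^T.
\]
The first summand contributes only the trivial eigenvalue $\Delta^2$ on $\mathbf{1}$, while $\|N_A N_B^T\| \leq \|N_A\|\,\|N_B\| \leq 4(\Delta-1) \leq 4\Delta$, establishing $\lambda(\G_0^\square) \leq 4\Delta$.

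The main point to get right is the identification $A_{\G_0^\square} = M_A M_B^T$: one must check that the two traversals of a square ($A$-edge-then-$B$-edge versus $B$-edge-then-$A$-edge) contribute to the same entry of the product, which is where the self-inverse property $B=B^{-1}$ is used. Once this combinatorial-algebraic bridge is in place, the remainder is a routine rank-one-plus-noise estimate, and the same argument applied verbatim to $\G_1^\square$ (whose adjacency matrix is $M_A^T M_B$, spectrally equivalent) gives the final bound.
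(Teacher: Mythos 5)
Your proof is correct, but it reaches the bounds by a genuinely different mechanism than the paper. The paper works with the full $2|G|\times 2|G|$ adjacency matrices $M_A$, $M_B$ of the bipartite double covers and uses the fact that they \emph{commute} (left and right multiplication commute): being commuting symmetric matrices they are simultaneously diagonalizable, so the eigenvalues of $M_A+M_B$ and $M_AM_B$ are sums and products of eigenvalues paired along a common eigenbasis; since the $\pm\Delta$ eigenspaces of the two matrices coincide, every unpaired-with-trivial eigenvalue is a sum (resp.\ product) of two quantities of absolute value $\leq 2\sqrt{\Delta}$, giving $4\sqrt{\Delta}$ and $4\Delta$, and the doubled $\Delta^2$ eigenvalue explains the splitting of $\G^\square$ into $\G_0^\square$ and $\G_1^\square$. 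You avoid simultaneous diagonalization entirely: for $\G^\cup$ you only need that the two-dimensional trivial eigenspace is invariant for both matrices and then apply the triangle inequality to operator norms on its orthogonal complement (this part would work for any two regular graphs on the same bipartition), and for $\G_0^\square$ you descend to $|G|\times|G|$ biadjacency matrices, split off the rank-one all-ones part, and invoke submultiplicativity $\|N_AN_B^T\|\leq\|N_A\|\,\|N_B\|$. Commutativity still enters your argument in disguise, through the identity $M_AM_B^T=M_BM_A^T$ needed to make the product symmetric, i.e.\ to make $\G_0^\square$ a bona fide undirected graph. What each approach buys: the paper's is shorter and makes the two-component structure of $\G^\square$ transparent; yours makes explicit the combinatorial identification of the adjacency matrix of $\G_0^\square$ as a product of biadjacency matrices, which the paper glosses over. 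Both arguments quietly use that $\Cay(G,A)$ and $\Cay(G,B)$ are connected and non-bipartite, so that $\Delta$ is a simple eigenvalue and $-\Delta$ is absent for the base Cayley graphs (equivalently, so that your $\|N_A\|\leq 2\sqrt{\Delta-1}$ bound holds); the paper's setup guarantees this for the relevant family.
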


\begin{proof}
Let $M_A$ and $M_B$ be the adjacency matrices of $\G_A$ and $\G_B$. Since these graphs are bipartite, they admit two eigenvalues $\Delta$ and $-\Delta$ and the remaining eigenvalues have an absolute value less than $2\sqrt{\Delta}$. 
The adjacency matrices of $\G^\cup$ and $\G^\square$ are respectively $M_A + M_B$ and $M_A M_B = M_B M_A$.
Since $M_A$ and $M_B$ commute, they can be made to have the same eigenspaces:
therefore the eigenvalues of $M_A+M_B$ are the sum of the eigenvalues of $M_A$
and $M_B$ and the eigenvalues of $M_AM_B$ are the products of the eigenvalues of
$M_A$ and $M_B$. The eigenvalue $-\Delta$ has the same eigenspace in $M_A$ and
$M_B$, therefore the eigenvalue $\Delta^2$ has an eigenspace of dimension 2 in
$M_AM_B$, meaning that $\G^\square$ splits into the two connected components
$\G_0^\square$ and $\G_1^\square$. 
\end{proof}

\paragraph{The quadripartite version.} A way of avoiding the somewhat cumbersome
TNC condition is to make the complex quadripartite rather than simply bipartite.
In this case we construct the vertex set $V$ as the disjoint union of four
copies of the group $G$: we set $V=V_0\cup V_1$ with $V_0=V_{00}\cup V_{11}$ and
$V_1=V_{10}\cup V_{01}$, where $V_{ij}=G\times\{i,j\}$, $i,j\in\{0,1\}$. The set
$Q$ of squares is then defined as the set of $4$-subsets of vertices of the form
\[
\{(g,00),(ag,01),(gb,10),(agb,11)\}.
\]
We see that this time the $Q$-neighbourhood of any vertex becomes naturally
in bijection with $A\times B$ without requiring any special properties of
$A=A^{-1}$
and $B=B^{-1}$.
In the present case the edge set $E_A$ of $\G_A$ becomes the set of pairs
of the form $\{(g,00),(ag,01)\}$ and of the form $\{(g,10),(ag,11)\}$, and the
edge set $E_B$ of $\G_B$ becomes the set of pairs $\{(g,00),(gb,10)\}$ and 
$\{(g,01),(gb,11)\}$. The graphs $\G_A$ and $\G_B$ are therefore both made up of two
connected components. As before, we may set $\G^\cup=(V,E_A\cup E_B)$, and
finally the graph $\G^\square$ over the vertex set $V$ puts an edge between
$(g,00)$ and $(agb,11)$ as well as between $(g,01)$ and $(agb,10)$ for all $g\in
G, a\in A, b\in B$. The two connected components $\G_0^\square$ and
$\G_1^\square$ have now become bipartite, the vertex set of $\G_0^\square$ being
$V_{00}\cup V_{11}$ and that of $\G_1^\square$ being $V_{01}\cup V_{10}$.

It is readily seen that Lemma~\ref{lem:lambda} also holds in the quadripartite
case with the eigenvalue analysis being similar. In the sequel we will not need
the bipartite structure of $\G_0^\square$ and $\G_1^\square$, and to lighten
notation we will just assume the complex to be bipartite and not necessarily
quadripartite. The quantum code construction presented in
Section~\ref{sec:construction} is however straightforwardly adapted to the quadripartite
case, and its analysis is essentially unchanged.

We note that the quadripartite version of the square left-right Cayley
complex appears in \cite{PK21}.

\subsection{Tanner codes}
A binary linear code of length $n$ is an $\f_2$-linear subspace of $\f_2^n$. For
sets $E$ of cardinality $|E|=n$, it will be convenient for us to identify $\f_2^n$
with $\f_2^E$, which we can think of as the space of functions from $E$ to
$\f_2$. Identication with $\f_2^n$ amounts to defining a one-to-one map between $E$ and
$[n]=\{1,2,\ldots ,n\}$, \textit{i.e.}\ a numbering of the elements of $E$.

Let $\G=(V,E)$ be a regular graph of degree $\Delta$, and for any vertex $v$
denote by $E(v)$ the set of edges incident to $v$. 
Assume an identification of $\f_2^{E(v)}$ with $\f_2^\Delta$ for every $v\in V$.
Let $x\in\f_2^{E}$ be a vector indexed by (or a function defined on) the set
$E$. Let us define the {\em local view} of $x$ at vertex $v$ as the subvector
$x_v=(x_e)_{e\in E(v)}$, \textit{i.e.}\ $x$ restricted to the edge-neighbourhood $E(v)$ of
$v$.

Let $C_0$ be a linear code of length $\Delta$, dimension $k_0=\rho_0\Delta$, and
minimum distance $d_0=\delta_0\Delta$.
We define the Tanner code \cite{T81} associated to $\G$ and $C_0$ as
\[
T(\G,C_0) =\{x\in\f_2^E : x_v\in C_0\;\text{for all}\;v\in V\}.
\]
In words, the Tanner code is the set of vectors over $E$ all of whose local
views lie in $C_0$.
By counting the number of linear equations satisfied by the Tanner code, we
obtain
\begin{equation}\label{eq:dimtanner}
\dim T(\G,C_0) \geq (2\rho_0-1)n.
\end{equation}
We also have the bound \cite{SS96,Gur} on the minimum distance $d$ of the Tanner
code:
\[
d\geq\delta_0 (\delta_0- \lambda(\G)/\Delta)n.
\]
Therefore, if $(\G_i)$ is a family of $\Delta$-regular expander graphs with
$\lambda(\G_i)\leq\lambda <d_0$, and if $\rho_0>1/2$, then the associated family of Tanner
codes has rate and minimum distance which are both $\Omega(n)$, meaning we have
an asymptotically good family of codes, as was first shown in \cite{SS96}.

\subsection{Quantum CSS codes}

A quantum CSS code is specific instance of a stabilizer code \cite{got97} that can be defined by two classical codes $\C_0$ and $\C_1$ in the ambient space $\f_2^n$, with the property that $\C_0^\perp \subset \C_1$ \cite{CS96,ste96}. If both codes are defined by their parity-check matrix, $\C_0 = \ker H_0, \C_1 =\ker H_1$, then the condition is equivalent to $H_0  H_1^T = 0$.
The resulting quantum code $\Q = (\C_0, \C_1)$ is the following subspace of
$(\mathbb{C}_2)^{\otimes n}$, the space of $n$ qubits:
\[ \Q := \mathrm{Span}\left\{ \sum_{z \in \C_1^\perp} |x+z\rangle \: : \: x\in \C_0 \right\},\]
where $\{ |x\rangle \: : \: x\in \f_2^n\}$ is the canonical basis of $(\mathbb{C}_2)^{\otimes n}$.

In practice, it is convenient to describe the code \emph{via} its \emph{generators}. A CSS code admits $X$-type generators which correspond to the rows of $H_1$ and and $Z$-type generators, corresponding to the rows of $H_0$.
The condition $\C_0^\perp \subset \C_1$ is simply that the rows of $H_0$ are orthogonal to the rows of $H_1$, where orthogonality is with respect to the standard inner product over $\f_2^n$.
A CSS code is called \emph{LDPC} is both $H_0$ and $H_1$ are sparse matrices, \textit{i.e.}\ each row and column has constant weight independent of the code length $n$. Equivalently, each generator acts nontrivially on a constant number of qubits, and each qubit is only involved in a constant number of generators. 

The dimension $k$ of the code counts the number of logical qubits and is given by 
\[ k = \text{dim} \, (\C_0/\C_1^\perp) = \text{dim} \, \C_0 + \text{dim} \, \C_1 - n.\]
Its minimum distance is $d = \min (d_X, d_Z)$ with
\[ d_X = \min_{w \in \C_0 \setminus \C_1^{\perp}} |w|, \quad d_Z = \min_{w \in \C_1 \setminus \C_0^\perp} |w|.\]
We denote the resulting code parameters by $\llbracket n,k,d\rrbracket$.
We say that a code family $(\Q_n)_n$ is \emph{asymptotically good} if its
parameters are of the form
\[ \llbracket n, k = \Theta(n),d = \Theta(n)\rrbracket.\]

\subsection{Tensor codes and dual tensor codes: robustness}
Let $A$ and $B$ be two sets of size $\Delta$. 
We define codes on the ambient space $\f_2^{A\times B}$
that we may think of the space of matrices whose rows (columns) are indexed by
$A$ (by $B$). If $C_A\subset \f_2^A$ and $C_B\subset\f_2^B$ are two linear codes,
we define the {\em tensor} (or product) code $C_A\otimes C_B$ as the space of
matrices $x$ such that for every $b\in B$ the column vector $(x_{ab})_{a\in A}$
belongs to $C_A$ and for every $a\in A$ the row vector $(x_{ab})_{b\in B}$
belongs to $C_B$. It is well known that $\dim(C_A\otimes
C_B)=\dim(C_A)\dim(C_B)$ and that the minimum distance of the tensor code is $d(C_A\otimes C_B)=d(C_A)d(C_B)$.

Consider the codes $C_A\otimes\f_2^B$ and $\f_2^A\otimes C_B$
consisting respectively of the space of matrices whose columns are codewords of
$C_A$ and whose rows are codewords of $C_B$. We may consider their sum
$C_A\otimes\f_2^B + \f_2^A\otimes C_B$ which is called a {\em dual
tensor} code, since it is the dual code of the tensor code
$C_A^\perp\otimes C_B^\perp=(C_A^\perp\otimes\f_2^B)\cap(\f_2^A\otimes
C_B^\perp)$. It is relatively straightforward to check that $d(C_A\otimes\f_2^B +
\f_2^A\otimes C_B)=\min(d(C_A),d(C_B))$.

\begin{defn}\label{def:robust}
Let $0\leq w\leq\Delta^2$. Let $C_A$ and $C_B$ be codes of length $\Delta$ with minimum distances $d_A$ and
$d_B$. We shall say that the dual tensor code $C=C_A\otimes\f_2^B +
\f_2^A\otimes C_B$ is $w$-{\em robust}, if for any codeword $x\in C$
of Hamming weight $|x|\leq w$, there exist $A'\subset A, B'\subset B$, $|A'|\leq
|x|/d_B$, $|B'|\leq |x|/d_A$, such that $x_{ab}=0$ whenever $a\notin A',
b\notin B'$.
\end{defn}

In words, $w$-robustness means that any dual tensor codeword of weight at
most $w$ is entirely supported within the union of a set of at most $|c|/d_A$
columns and a set of at most $|c|/d_B$ rows. In fact, the following proposition shows that any such codeword is the sum of a word of $C_A\otimes\f_2^B$ and of a word of $\f_2^A\otimes C_B$ supported on a few columns or rows.
\begin{prop}
Let $C_A$ and $C_B$ be codes of length $\Delta$ with minimum distances $d_A$ and
$d_B$, and suppose $C=C_A\otimes\f_2^B +
\f_2^A\otimes C_B$ is $w$-robust with $0<w<d_Ad_B$. Then for any codeword $x\in
C$ such that $|x|\leq w$,  there exist $A'\subset A, B'\subset B$, $|A'|\leq
|x|/d_B$, $|B'|\leq |x|/d_A$ and 
a decomposition $x = c + r$, with $c \in C_{A} \otimes \f_2^{B'}$ and $r \in \f_2^{A'}
\otimes C_B$. 
\end{prop}
\begin{proof}
To see this, apply the definition and write $x=r'+c'$, with
$r'_{ab}=c'_{ab}$ for any $(a,b)\in (A\setminus A')\times (B\setminus B')$.
The restrictions of $r'$ and $c'$ to $(A\setminus A')\times
(B\setminus B')$ both belong to the code obtained by tensoring $C_A'$ and
$C_B'$, the punctured codes deduced from $C_A$ and $C_B$ by throwing away
coordinates of $A'$ and $B'$. This code is the same as the punctured code
obtained from 
$C_A\otimes C_B$ by throwing away the coordinates $A'\times B \cup
A\times B'$. Therefore, there exists a tensor codeword of $C_A\otimes C_B=C_{A}
\otimes \f_2^{B}\cap \f_2^A\otimes C_B$ that
coincides with $c'=r'$ on $(A\setminus A')\times
(B\setminus B')$: adding this tensor codeword to both $c'$ and $r'$ yields the
required pair $r,c$ such that $x=r+c$.
\end{proof}

Our notion of robustness for the dual tensor code also implies a form of robustness for the corresponding tensor code.
\begin{prop}\label{prop:tensor}
Let $C_A$ and $C_B$ be codes of length $\Delta$ and minimum distances $d_A, d_B$
such that the dual tensor code $C_A\otimes\f_2^B + \f_2^A \otimes C_B$ is
$w$-robust with $w\leq d_Ad_B/2$. Then, any word $x$ close to both the column
and row code is also close to the tensor code: precisely,
if $d(x, C_A \otimes \f_2^{B})+ d(x, \f_2^{A} \otimes C_B) \leq w$ then 
\begin{align}\label{eqn:robust}
d(x, C_A \otimes C_B) \leq \frac{3}{2} \left( d(x, C_A \otimes \f_2^{B})+ d(x, \f_2^{A} \otimes C_B) \right).
\end{align}
\end{prop}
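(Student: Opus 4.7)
The plan is to construct a tensor codeword $y\in C_A\otimes C_B$ close to $x$ by blending the best column-approximation $u\in C_A\otimes\f_2^B$ and the best row-approximation $v\in\f_2^A\otimes C_B$ on the region where they already agree. Writing $\alpha=d(x,u)$ and $\beta=d(x,v)$, the key observation is that $u+v$ lies in the dual tensor code and has weight at most $\alpha+\beta\leq w$, so Definition~\ref{def:robust} applies: there exist $A'\subset A$, $B'\subset B$ with $|A'|\leq(\alpha+\beta)/d_B$ and $|B'|\leq(\alpha+\beta)/d_A$ such that $u=v$ on $(A\setminus A')\times(B\setminus B')$. The hypothesis $\alpha+\beta\leq w\leq d_A d_B/2$ then gives $|A'|<d_A$ and $|B'|<d_B$.

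Next I would exhibit the tensor codeword $y$. The submatrix $M:=u|_{(A\setminus A')\times(B\setminus B')}=v|_{(A\setminus A')\times(B\setminus B')}$ has columns that are restrictions to $A\setminus A'$ of $C_A$-columns of $u$, and rows (indexed by $a\in A\setminus A'$) that are restrictions to $B\setminus B'$ of $C_B$-rows of $v$; so $M$ belongs to the tensor product $(C_A)_{A\setminus A'}\otimes(C_B)_{B\setminus B'}$ of the two punctured codes. Since $|A'|<d_A$ and $|B'|<d_B$, puncturing on each factor is a linear bijection, and the tensor product is likewise a bijection, so $M$ lifts uniquely to some $y\in C_A\otimes C_B$.

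A short argument then shows that $y$ inherits full columns and rows from $u$ and $v$ respectively: for $b\in B\setminus B'$, the two $C_A$-codewords $y_{\cdot,b}$ and $u_{\cdot,b}$ agree on $A\setminus A'$, so their difference is a $C_A$-codeword supported in $A'$ and hence zero, giving $y_{\cdot,b}=u_{\cdot,b}$; symmetrically $y_{a,\cdot}=v_{a,\cdot}$ for $a\in A\setminus A'$. The distance bound then follows from partitioning $A\times B$ into the three disjoint pieces $A\times(B\setminus B')$, $(A\setminus A')\times B'$, and $A'\times B'$: on the first $y=u$, contributing at most $\alpha$ to $|x-y|$; on the second $y=v$, contributing at most $\beta$; on the third, I use the trivial bound $|A'||B'|\leq(\alpha+\beta)^2/(d_A d_B)\leq(\alpha+\beta)/2$, where the last inequality uses $\alpha+\beta\leq d_A d_B/2$. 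Summing the three contributions yields $d(x,y)\leq\frac{3}{2}(\alpha+\beta)$, which is exactly \eqref{eqn:robust}.

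The main obstacle I anticipate is purely bookkeeping: one must ensure that the lift $y$ of $M$ coincides globally with $u$ on all columns outside $B'$ and with $v$ on all rows outside $A'$, not merely on the small window where $M$ lives. Once this global agreement is in place the three-piece partition argument is mechanical, and the factor $3/2$ (rather than $2$) is exactly the gain obtained from the trivial bound $|A'||B'|\leq(\alpha+\beta)/2$ on the residual block $A'\times B'$.
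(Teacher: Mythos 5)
Your proof is correct, and it takes a genuinely different route from the paper's. The paper writes $x=C+e_C=R+e_R$, applies robustness to $C+R=e_C+e_R$ to split it as $c+r$ with $c\in C_A\otimes\f_2^B$ supported on at most $d_B/2$ columns and $r\in\f_2^A\otimes C_B$ on at most $d_A/2$ rows, notes that $C+c=R+r$ is a tensor codeword, and derives the two asymmetric bounds $d(x,C_A\otimes C_B)\leq 2|e_C|+|e_R|$ and $d(x,C_A\otimes C_B)\leq |e_C|+2|e_R|$, whose average gives the factor $\frac{3}{2}$. You instead build the approximating tensor codeword explicitly: robustness forces $u$ and $v$ to agree on the window $(A\setminus A')\times(B\setminus B')$, the common submatrix lies in $(C_A)_{A\setminus A'}\otimes(C_B)_{B\setminus B'}$, and since $|A'|\leq d_A/2<d_A$ and $|B'|\leq d_B/2<d_B$ the restriction map $C_A\otimes C_B\to(C_A)_{A\setminus A'}\otimes(C_B)_{B\setminus B'}$ is a bijection (injectivity is the minimum-distance argument you give; surjectivity deserves one explicit line via a dimension count, $\dim C_A\cdot\dim C_B$ on both sides), so the window lifts uniquely to $y\in C_A\otimes C_B$, which by your minimum-distance argument agrees with $u$ on every column outside $B'$ and with $v$ on every row outside $A'$. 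Your three-piece partition then yields $d(x,y)\leq\alpha+\beta+|A'||B'|\leq\frac{3}{2}(\alpha+\beta)$. The paper's argument is shorter and purely algebraic; yours is more constructive, exhibits the approximating codeword and where it comes from, and actually proves the slightly finer bound $\alpha+\beta+|A'||B'|$, with the factor $\frac{3}{2}$ arising only from the crude estimate on the residual block $A'\times B'$.
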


The conclusion of Proposition~\ref{prop:tensor} is very
close to a property called ``robustly testable'' in \cite{DEL21}. The difference
is that robustly testable tensor codes are required to satisfy \eqref{eqn:robust}
without any condition on its right hand side (equivalently,
with $w=2\Delta^2$), at the expense of allowing a looser constant than $3/2$.
We note that the
constant $3/2$ in Proposition~\ref{prop:tensor} is tight\footnote{For example, consider $C_A = C_B =
\mathrm{Span}(111100, 110011)$ and define $x = \left[ \begin{smallmatrix}
11&11&00\\11&
00&11\\10&00&00\\10&00&00\\01&00&00\\01&00&00\end{smallmatrix}\right]$. One
checks that $d(x, C_A \otimes \f_2^{B}) =  d(x, \f_2^{A} \otimes C_B)=4$ and $d(x, C_A \otimes C_B) =6$. }.

\begin{proof}
Let $x$ be an $A\times B$ matrix that we write as
$x =C + e_C = R + e_R$ with $C \in C_A \otimes \f_2^{B}$, $R \in \f_2^{A} \otimes C_B$ and 
\[ |e_C| = d(x, C_A \otimes \f_2^{B}), \quad  |e_R| = d(x,  \f_2^{A}\otimes C_B).\]
Let us suppose $|e_C|+|e_R|\leq w$.
From $C+R = e_C+e_R$, we obtain that $|C+R| \leq w$ and the robust testability of the dual product code implies that there exist $c \in C_A \otimes \f_2^{B}$, $r \in \f_2^{A} \otimes C_B$,
supported respectively, since $w\leq d_Ad_B/2$, on at most $d_B/2$ columns and at
most $d_A/2$ rows, and such that $C+R=c+r$. Since $c$ is supported by at most
$d_B/2$ columns, we have that $|c+r|\geq |c|$, and similarly $|c+r|\geq |r|$.

Note in particular that $R+r = C + c$ is a codeword of the tensor code $C_A \otimes C_B$.
One can therefore write $x = (C+c) + c+e_C$, from which we have:
\[
d(x,C_A\otimes C_B) \leq |c+e_C|\leq |c|+|e_C|\leq
|c+r|+|e_C|=|e_C+e_R|+|e_C|\leq 2|e_C|+|e_R|.
\]
Writing $x=(R+r)+r+e_R$, we similarly get $d(x,C_A\otimes C_B) \leq
2|e_R|+|e_C|$, and adding the two inequalities we conclude that
\[
d(x,C_A\otimes C_B) \leq \frac{ 3}{2}  \left(d(x, C_A \otimes \f_2^B) +
d(x,  \f_2^{A}\otimes C_B)\right). \qedhere
\]
\end{proof}

If $C_A\subset \f_2^A$ is a code and $A'\subseteq A$, let us denote by
$C_{A'}\subset\f_2^{A'}$ the {\em punctured code} consisting of all subvectors
$(c_a)_{a\in A'}$ of all codewords $(c_a)_{a\in A}$ of $C_A$. Similarly, for a
code $C_B$ we denote by $C_{B'}$ the punctured code on $B'$ for $B'\subseteq B$.

Let us introduce the following twist on the above definition of
robustness for dual tensor codes, which allows us to boost its potential.

\begin{defn}
Let $C_A\subset \f_2^A$ and $C_B\subset \f_2^B$. 
For integers $w,p$,
let us say that the dual tensor code $C_{A}\otimes\f_2^{B}+\f_2^{A}\otimes
C_{B}$ is $w$-robust with $p$-resistance to puncturing, 
if
for any  $A'\subset A$ and $B'\subset B$ such that $|A'|=|B'|=\Delta-w'$, with
$w'\leq p$, the dual tensor code
\[
C_{A'}\otimes\f_2^{B'}+\f_2^{A'}\otimes C_{B'}
\]
is $w$-robust.
\end{defn}

We shall need the following result on the robustness of random dual tensor
codes. 

\begin{restatable}{theo}{Probust}
\label{thm:P-robust}
Let $0<\rho_A < 1$ and $0<\rho_B<1$. Let $0<\varepsilon<1/2$ and $1/2+\eps<\gamma <1$. Let $C_A$ be a random code obtained from a
random uniform $\rho_A\Delta\times\Delta$ generator matrix, and let $C_B$ be a
random code obtained from a random uniform $(1-\rho_B)\Delta\times\Delta$
parity-check matrix. 
With probability tending to~$1$ when $\Delta$ goes to
infinity, the dual tensor code
\[
C_{A}\otimes\f_2^{B}+\f_2^{A}\otimes C_{B}
\]
is $\Delta^{3/2-\varepsilon}$-robust with $\Delta^\gamma$-resistance to puncturing.
\end{restatable}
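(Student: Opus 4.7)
The strategy is to combine a Gilbert--Varshamov guarantee on the component codes with a structural counting argument for the dual tensor code, then union bound over all admissible punctures. The outer union-bound layer is the cheap step; the per-puncture analysis is where the real work lies.

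\textbf{Step 1 (minimum distances).} The random generator/parity-check model is chosen so that $C_A$, $C_B$, $C_A^\perp$ and $C_B^\perp$ are all uniformly distributed over codes of their respective dimensions. A standard Gilbert--Varshamov argument then gives that with probability $1-o(1)$ both codes and their duals have minimum distance at least $\delta_0\Delta$ for some $\delta_0>0$ depending only on $\rho_A,\rho_B$. Since puncturing at $w'\leq p=\Delta^\gamma=o(\Delta)$ coordinates can drop the distance by at most $w'$, every punctured code $C_{A'},C_{B'}$ of interest still has minimum distance $\Omega(\Delta)$.

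\textbf{Step 2 (robustness for a fixed puncture).} Fix $A'\subset A$ and $B'\subset B$ with $|A\setminus A'|,|B\setminus B'|\leq p$, and write $D=C_{A'}\otimes\f_2^{B'}+\f_2^{A'}\otimes C_{B'}$. Call a codeword $c\in D$ \emph{bad} if $|c|\leq w=\Delta^{3/2-\varepsilon}$ yet the support of $c$ cannot be covered by $|c|/d_{A'}$ columns together with $|c|/d_{B'}$ rows. Any $c\in D$ admits a decomposition $c=u+v$ with $u\in C_{A'}\otimes\f_2^{B'}$, $v\in\f_2^{A'}\otimes C_{B'}$; choose one that minimises $|u|+|v|$, and let $\alpha$ and $\beta$ denote the number of nonzero columns of $u$ and nonzero rows of $v$. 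Since each such nonzero column of $u$ is a nonzero $C_{A'}$-codeword of weight $\geq d_{A'}$ and analogously for $v$, computing the weight of $c$ outside the $\beta\times\alpha$ overlap rectangle yields
\[
|c|\;\geq\;\alpha(d_{A'}-\beta)+\beta(d_{B'}-\alpha)\;=\;\alpha d_{A'}+\beta d_{B'}-2\alpha\beta.
\]
This inequality shows that whenever $\alpha\beta$ is small compared to $|c|$, the desired cover bounds $\alpha\leq|c|/d_{A'}$ and $\beta\leq|c|/d_{B'}$ follow (up to constants that can be absorbed by inflating $w$ slightly), so the only bad codewords are those whose minimal decomposition has a relatively large overlap rectangle.

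\textbf{Bounding bad decompositions.} I bound $\Pr[\exists\text{ bad } c\in D]$ by summing over the structural parameters $(k,\alpha,\beta)$ and for each one counting the expected number of compatible $(u,v)$. The counting uses: $\binom{\Delta}{\alpha}\binom{\Delta}{\beta}$ choices for the supporting columns of $u$ and rows of $v$; roughly $2^{\alpha\rho_A\Delta+\beta\rho_B\Delta}$ codeword assignments on those rows/columns; and, crucially, the probability that the resulting $u+v$ has weight exactly $k$, which translates to parity conditions that are unlikely to be met under the uniform distribution on $C_A,C_B$. Because the bad regime forces $\alpha\beta=\Theta(k)$, the supports of $u$ and $v$ must heavily overlap and the extra cancellation requirements pay off: the per-puncture failure probability comes out to $o\bigl(2^{-\Delta^\gamma\log\Delta}\bigr)$.

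\textbf{Step 3 (union bound).} The number of puncture pairs is $\binom{\Delta}{\leq p}^2\leq 2^{2p\log_2(e\Delta/p)}=2^{O(\Delta^\gamma\log\Delta)}$, so Step~2 survives this union bound and the overall failure probability tends to $0$.

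\textbf{Main obstacle.} The delicate point is Step~2. One must choose the decomposition and the structural parameters finely enough that the counting of bad configurations beats both the $\binom{\Delta^2}{k}$ factor from the support choice and the $2^{O(\Delta^\gamma\log\Delta)}$ factor from the puncture union bound. The hypothesis $\gamma<1$ keeps the latter subexponential, while $\gamma>1/2+\varepsilon$ provides the slack between the weight scale $w=\Delta^{3/2-\varepsilon}$ and the puncturing budget $p=\Delta^\gamma$ that the probabilistic estimates need; getting these exponents to balance is the heart of the proof.
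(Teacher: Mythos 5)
Your outer scaffolding matches the paper's: Gilbert--Varshamov distances for the component codes, a per-puncture robustness statement, and a union bound over the $2^{O(\Delta^{\gamma}\log\Delta)}$ punctures using $\gamma<1$. The gap is in Step~2, which you correctly identify as the heart of the matter, and the counting you sketch there does not work. Your claim that ``the bad regime forces $\alpha\beta=\Theta(k)$'' is false: by your own inequality $|c|\geq\alpha(d_{A'}-\beta)+\beta(d_{B'}-\alpha)$, the way a low-weight $c=u+v$ escapes the cover bound is precisely by having $\beta\gtrsim d_{A'}$ and $\alpha\gtrsim d_{B'}$, i.e.\ $\alpha\beta=\Omega(\Delta^{2})\gg k$: dense $u$ and $v$ that almost cancel. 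In that regime a first moment over decompositions $(u,v)$ is hopeless. Even for a single target matrix $c$ supported on $\alpha$ columns and $\beta$ rows, there are $2^{\alpha\beta}$ ways to split $c$ as $u+v$ on the overlap rectangle, while the probability that a given split has all columns of $u$ in $C_{A'}$ and all rows of $v$ in $C_{B'}$ is only about $2^{-(1-\rho_A)\Delta\alpha-(1-\rho_B)\Delta\beta}$; the exponent $\alpha\beta-(1-\rho_A)\Delta\alpha-(1-\rho_B)\Delta\beta$ is already nonnegative at $\alpha=\beta=\Delta$ whenever $\rho_A+\rho_B\geq 1$ (which includes the parameters used for the quantum code), and one must still multiply by the $2^{\Theta(k\log\Delta)}$ choices of $c$. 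No balancing of $(k,\alpha,\beta)$ rescues this, and the asserted per-puncture bound $o(2^{-\Delta^{\gamma}\log\Delta})$ is never actually derived.

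The paper's proof sidesteps the decomposition entirely. It conditions on $C_A$ (fixed, with $d_A\geq\delta\Delta$) and uses only the randomness of the parity-check matrix of $C_B$, via the observation that $X\in C_A\otimes\f_2^{B}+\f_2^{A}\otimes C_B$ iff every row of $H_AX$ lies in $C_B$. It then enumerates candidate low-weight matrices $X$ directly --- more precisely their restriction $X_p$ to the columns of weight $\leq\sqrt{\Delta}/\log_2\Delta$ (at most $\delta\Delta/2$ heavy columns are set aside), which number only $2^{O(\Delta^{3/2})}$ --- and shows (Lemma~\ref{lem:rank}, Corollary~\ref{cor:rank}) that if $X_p$ has many nonzero rows then $H_AX_p$ has rank $\Omega(\sqrt{\Delta}\log\Delta)$, so all of its rows fall into the random $C_B$ (even after puncturing, Lemma~\ref{lem:randomCp}) with probability $2^{-\Omega(\Delta^{3/2}\log\Delta)}$, beating the enumeration. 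The heavy columns are absorbed into the column part of the cover, and Lemma~\ref{lem:d/2} then converts ``supported on few rows and few columns'' into the exact $|c|/d_A$ and $|c|/d_B$ bounds required by Definition~\ref{def:robust} --- which your remark about absorbing constants ``by inflating $w$'' does not deliver, since the definition demands those precise cover sizes. You would need this rank/syndrome mechanism, or an equivalent way of charging the dense-cancellation configurations, to complete Step~2. (Minor point: the hypothesis $\gamma>1/2+\eps$ plays no role in this theorem; it is needed only in the proof of Theorem~\ref{thm:main}.)
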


Except for the fact that we allow a larger robustness parameter, (namely
$\Delta^{1/2-\varepsilon}$), this result is essentially in \cite{PK21}.
We provide a proof in the appendix, which closely follows the approach of
\cite{PK21}.

\section{Asymptotically good quantum Tanner codes}
\label{sec:main}

\subsection{The construction}\label{sec:construction}

Let $G$ belong to an infinite family of groups with size $|G| \to \infty$, together with two sets of generators $A=A^{-1}$ and $B=B^{-1}$ of fixed cardinality $\Delta$. We form the associated family of left-right Cayley complexes $X$, and will assume throughout that they satisfy the TNC condition \eqref{eqn:TNC}.

Recall that the square-neighbourhood (or link) $Q(v)$ of a vertex $v = (g,i) \in
G \times \{0,1\}$ is the set of squares incident to $v$. To lighten notation, let
us write, for $a\in A,b\in B$, $av=(ag,1-i), vb=(gb,1-i), avb=(agb,i)$, so that $\{v,av,vb,avb\}$ is a
square incident to $v$. 
The TNC condition implies in particular that the map
\begin{eqnarray*}
A\times B &\to& Q(v)\\
(a,b) & \mapsto & \{v,av,vb,avb\}
\end{eqnarray*}
is one-to-one, and let $\phi_v$ denote the inverse of this map.
A crucial consequence of the TNC condition is that for two adjacent vertices
$v,v'\in V$, the intersection $Q(v)\cap Q(v')$ has cardinality $\Delta$, and
more specifically, 
\begin{eqnarray*}
\phi_v(Q(v)\cap Q(vb))=A\times\{b\}&\quad&\phi_{vb}(Q(v)\cap
Q(vb))=A\times\{b^{-1}\}\\
\phi_v(Q(v)\cap Q(av))=\{a\}\times B&\quad&\phi_{av}(Q(v)\cap
Q(av))=\{a^{-1}\}\times B.
\end{eqnarray*}

For a vector $x\in\f_2^Q$, we may define its local view $x_v$ as the restriction
$(x_q)_{q\in Q(v)}$
of $x$ to the link $Q(v)$ of $v$.
Through the indexation map $\phi_v$, the local view at any vertex $v$ can be seen as an $A\times B$
array, 
all of its rows being shared by the local views of the $A$-neighbours of
$v$, and all of its columns being shared by the local views of its
$B$-neighbours. The situation is illustrated on Figure~\ref{fig:grid}.

\medskip

\begin{figure}[h]
\begin{center}
\begin{tikzpicture}
\draw (0,0) rectangle (3,3);
\draw[fill=black!10] (1,0) rectangle (1.5,3);
\draw[step=0.5cm] (0,0) grid (3,3);

\draw (4,0) rectangle (7,3);
\draw[fill=black!10] (6,0) rectangle (6.5,3);
\draw[step=0.5cm] (4,0) grid (7,3);

\node at (-0.7,1.5) {$Q(v)$};
\node at (7.8,1.5) {$Q(vb)$};
\node at (1.25,-0.4) {$b$};
\node at (6.35,-0.35) {$b^{-1}$};
\end{tikzpicture}
\end{center}
\caption{The structure of $Q(v)$ and $Q(vb)$ at neighbouring vertices $v$ and
$vb$. Both square-neighbourhoods are isomorphic to $A\times B$ and share a
common column, which is column $b$ for the $A\times B$ grid $Q(v)$ and column
$b^{-1}$ for the grid $Q(vb)$. \label{fig:grid}}
\end{figure}
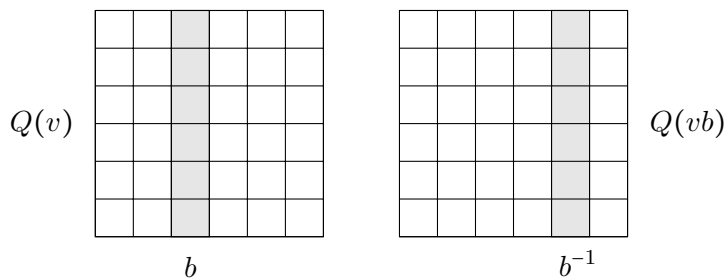

We define a quantum CSS code $\Q = (\C_0, \C_1)$ by associating qubits to the squares of the complex $X$ and enforcing constraints on the local view of each vertex. The idea is to associate $Z$-type generators with vertices of $V_0$ and $X$-type generators with $V_1$. Since their support can only overlap on a column or on a row, it suffices to impose orthogonality constraints between the two types of generators on each row and column to obtain a CSS code satisfying $\C_0^\perp \subset \C_1$. 
More precisely, we choose two codes $C_A\subset\f_2^A$ and $C_B\subset\f_2^B$ and
define the two tensor codes $C_0 = C_A \otimes C_B$ and $C_1 = C_A^\perp \otimes
C_B^\perp$. To define the two sets of generators, 
choose a basis $\beta_0$ of $C_0$ and a basis $\beta_1$ of $C_1$. For every
vertex $v\in V_0$ we define $\dim C_0$  generators $x\in\f_2^Q$ of type $Z$ by
requiring that their local views $x_v$ at $v$ be equal (through $\phi_v$) to a
basis element in $\beta_0$ and that $x_q=0$ for all $q\notin Q(v)$.
Similarly, we define $|V_1|\dim C_1$ generators of type $X$ by imposing a local
view equal to a basis element of $\beta_1$ on a single vertex $v\in V_1$ and
requiring $x_q=0$ for all values outside the neighbourhood $Q(v)$ of $v$.
We see that $X$-generators and $Z$-generators are orthogonal by design.

Equivalently, the code $\C_0$ ($\C_1$) orthogonal to all $Z$-generators
($X$-generators) is defined as the
set of vectors $x\in\f_2^Q$ such that $x_v$ is in $C_0^\perp$ (in $C_1^\perp$).
In other words, the quantum code $\Q = (\C_0, \C_1)$ is the pair of Tanner codes
\begin{equation}
\label{eq:tanner}
\C_0=T(\G_0^\square,C_0^\perp), \qquad \C_1=T(\G_1^\square,C_1^\perp).
\end{equation}

To have the same number of $X$ and $Z$-type generators, we shall set $\rho=\dim
C_A/\Delta$ and require $\dim C_B=\Delta -\dim C_A$: consequently we shall have
$\dim C_0=\dim C_1=\rho(1-\rho)\Delta^2$.

It is immediate that this code is LDPC since all its generators have weight at most $\Delta^2$, which is a constant, and any qubit is involved in at most $4 \rho(1-\rho) \Delta^2 \leq \Delta^2$ generators\footnote{By comparison, the quantum code of \cite{PK21} admits generators of weight $O(\Delta)$.}. 
The code length is the number of squares in the complex, $n = \Delta^2 |G|/2$.
While computing the exact dimension $k$ of the code would require to check for
possible dependencies among generators, it is straightforward to get a lower
bound by counting the number of generators, namely $|V_0|\dim C_0+|V_1|\dim C_1
=2 \rho (1-\rho) \Delta^2 |G|$.
This yields the following bound for the rate of the quantum code:
\begin{equation}\label{eq:rate}
 \frac 1n\dim\Q \geq (2\rho-1)^2,
\end{equation}
with equality when all the generators are independent. In particular, the rate is $>0$ for any value of $\rho \ne 1/2$.

Most of the rest of this section is devoted to establishing a linear lower bound for the
minimum distance of this quantum code, provided that the Cayley graphs
associated to $X$ are sufficiently expanding and the dual tensor codes
$C_0^\perp$ and $C_1^\perp$ are sufficiently robust. This in turn requires
$\Delta$ to be large enough and $G$ to be non Abelian (since the Cayley graph of
an Abelian group cannot be an expander if the degree is constant). But the
quantum Tanner code construction is more general, and even small examples where
$C_A = C_B^\perp$ is a small Hamming code could display good performance. For
instance, one could consider $G = \mathbbm{Z}/8\mathbbm{Z} \times
\mathbbm{Z}/2\mathbbm{Z}$, $A = \{ (a,0) \: : \: a\ne 0\}$, $B = \{(b,1) \: : \:
b\ne 1\}$ and $C_A$, $C_B$ to be the $[7,4,3]$ Hamming code and its dual
$[7,3,4]$.
The associated quantum Tanner code has length $392$, dimension at least $8$, and generators of weight $12$.

\subsection{Proof of Theorem \ref{thm:main}}

Let us recall our main technical theorem:

\main*

\begin{proof}

We consider a left-right Cayley complex $X$ from Section~\ref{subsec:LRCayley}
over a group $G$, satisfying the TNC condition and such that
$\Cay(G,A)$ and $\Cay(G,B)$ are Ramanujan graphs. 

Let $x \in \C_1$ be a codeword of weight $|x| <  \delta n/4\Delta^{3/2+\eps}$. It induces a subgraph $\G_{1,x}^\square$ of $\G_1^\square$ with vertex set $S \subset V_1$. 
The local view for any $v \in S$ corresponds to a codeword of $C_1^\perp =
C_A\otimes\f_2^B +\f_2^A\otimes C_B$. This code is $w$-robust so codewords of weight less than $w = \Delta^{3/2-\eps}$ have a support restricted to a small number $\leq w/(\delta \Delta)$ of rows and columns: in particular, the local view restricted to any column (or row) is at distance at most $\Delta^{1/2-\eps}/\delta$ from a codeword of $C_A$ (or $C_B$). 

Let us call \emph{normal vertices} the vertices of $S$ with degree less than $\Delta^{3/2-\eps}$ in $\G_{1,x}^\square$, and define $S_e$, the set of \emph{exceptional vertices} with degree greater than $\Delta^{3/2- \eps}$. Expansion in $\G_1^\square$ ensures that the set of exceptional vertices is small compared to $S$.

\begin{claim}\label{claim:Se}
The set of exceptional vertices has size 
\begin{align}\label{eqn:Se}
|S_e| \leq \frac{64}{\Delta^{1-2\eps}} |S|.
\end{align}
\end{claim}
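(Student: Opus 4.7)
The plan is to combine the expander mixing lemma (Lemma~\ref{lem:mixing}) applied to $\G_1^\square$ with the spectral bound $\lambda(\G_1^\square) \leq 4\Delta$ from Lemma~\ref{lem:lambda}.

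First I would observe that each square in the support of $x$ contributes one edge to $\G_{1,x}^\square$ and is incident to exactly two $V_1$-vertices, so $\deg_{\G_{1,x}^\square}(v) = |x_v|$ for every $v\in S$, and $\sum_{v\in S}|x_v| = 2|x|$. Since each local view at $v\in S$ is a nonzero codeword of the dual tensor code $C_1^\perp$, whose minimum distance is $\min(d_A,d_B)\geq \delta\Delta$, this immediately gives $|S|\leq 2|x|/(\delta\Delta)$. Plugging in the hypothesis $|x| < \delta n/(4\Delta^{3/2+\eps})$ together with $n=\Delta^2|V_1|/2$ then yields the crucial smallness bound $|S|/|V_1| \leq 1/(4\Delta^{1/2+\eps})$.

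Next, from the definition of $S_e$, every $\G_{1,x}^\square$-edge incident to a vertex of $S_e$ must have its other endpoint in $S$ (otherwise that endpoint would itself lie in $S$), giving the lower bound $|E_{\G_{1,x}^\square}(S_e,S)|\geq |S_e|\Delta^{3/2-\eps}$. For the matching upper bound I would apply Lemma~\ref{lem:mixing} to the $\Delta^2$-regular multigraph $\G_1^\square$:
$$|E_{\G_{1,x}^\square}(S_e,S)| \leq |E_{\G_1^\square}(S_e,S)| \leq \frac{\Delta^2}{|V_1|}|S_e||S| + 4\Delta\sqrt{|S_e||S|}.$$
The smallness bound on $|S|/|V_1|$ shrinks the first term to at most $\tfrac{1}{4}\Delta^{3/2-\eps}|S_e|$, which I move to the left-hand side. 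Squaring the remaining inequality isolates $|S_e|$ and, after a few lines of routine algebra, produces $|S_e| \leq \tfrac{256}{9\,\Delta^{1-2\eps}}|S|$, comfortably below the claimed $64/\Delta^{1-2\eps}$.

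I do not anticipate any substantive obstacle: the argument is a standard expander-mixing bootstrap. The points that deserve care are the identification of multi-edges in $\G_1^\square$ with individual squares (so that degrees really do equal local-view weights), and choosing $\G_1^\square$ rather than $\G_1$ or $\G^\cup$, so that the relevant degree is $\Delta^2$ and the relevant spectral bound is $4\Delta$; both are already supplied by the preliminaries.
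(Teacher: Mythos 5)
Your proposal is correct and follows essentially the same route as the paper: bound $|S|$ via the minimum distance $\delta\Delta$ of the dual tensor code, then apply the expander mixing lemma to $E(S_e,S)$ in the $\Delta^2$-regular graph $\G_1^\square$ with $\lambda(\G_1^\square)\leq 4\Delta$, absorb the main term using the smallness of $|S|/|V_1|$, and solve for $|S_e|$. Your slightly sharper bound $|S|/|V_1|\leq 1/(4\Delta^{1/2+\eps})$ (the paper uses $1/(2\Delta^{1/2+\eps})$) yields the constant $256/9$ instead of $64$, which is consistent with the claim.
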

\begin{proof}
The degree of each vertex $v \in S$ in the subgraph $\G_{1,x}^\square$ is at
least $\delta \Delta$ since the local view must correspond to a codeword of
$C_1^\perp = C_A\otimes\f_2^B +\f_2^A \otimes C_B$. It implies that 
\begin{equation}
\label{eq:|S|}
|S| \leq
\frac{2|x|}{\delta \Delta} \leq \frac{1}{4\Delta^{1/2+\eps}}|V_1|.
\end{equation}

Applying the Expander Mixing Lemma~\ref{lem:mixing} to $E(S_e,S)$ in
$\G_1^\square$, regardless of whether $\G_1^\square$ is bipartite or not, we obtain from
Lemma~\ref{lem:lambda} that:
\begin{align*}
 |E(S_e,S) | &\leq \frac{\Delta^2}{|V_1|}|S_e||S| +4\Delta \sqrt{|S_e| |S|}\\
             &\leq \frac 12\Delta^{3/2-\eps}|S_e| + 4\Delta \sqrt{|S_e| |S|}
\end{align*}
By definition of $S_e$, it also holds that $|E(S_e, S)| \geq \Delta^{3/2- \eps}|S_e|$.
Combining both inequalities, we obtain $\Delta^{1/2-\eps} \sqrt{|S_e|} \leq 8 \sqrt{|S|}$ and the claim follows.
\end{proof}

The support of the local view of any normal vertex of $S$ decomposes into a small
number of rows and columns, which are shared with vertices in $V_0$. We now
introduce the set $T$ of vertices of $V_0$ whose local views share with a
normal vertex of $S$ either a row or a column of large weight. Formally:

\paragraph{Defining the subset $T\subset V_0$.}
The vector $x$, viewed as a set of squares, defines a subset $E_x$ of edges of
$\G^\cup$, namely the edges incident to a square in $x$. Let us say that an edge
of $E_x$ is {\em heavy,} if it is incident to at least $\delta \Delta- \Delta^{1/2-\eps}/\delta$
squares of $x$. Let $T$ be the set of vertices of $V_0$ that are connected to
(at least) one {\em normal} vertex of $S$
{\em through a heavy edge}. Let us keep in mind that the local view of a normal
vertex of $S$ is supported by at most $\Delta^{1/2-\eps}/\delta$
rows and at most $\Delta^{1/2-\eps}/\delta$ columns, so a heavy edge between a
normal
vertex of $S$ and a vertex of $T$ corresponds to either a row or a column shared
by the two local views, which is at distance at most $\Delta^{1/2-\eps}/\delta$ from a nonzero codeword of $C_A$ (or $C_B$). 

\begin{claim}\label{claim:degreeT}
The degree in $E_x$ of any vertex of $T$ is at least $\delta\Delta- \Delta^{1/2-\eps}/\delta$.
\end{claim}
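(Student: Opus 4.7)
The plan is to argue that the large number of squares incident to the heavy edge realizing $u$'s membership in $T$ forces many distinct edges of $E_x$ to be incident to $u$. The key geometric observation is that inside a square $\{u, au, ub, aub\}$, the vertex $u$ is incident to exactly two edges of the square, one $A$-edge $(u,au)$ and one $B$-edge $(u,ub)$. Hence, if a fixed edge $e$ incident to $u$ lies in many squares of $x$, each such square contributes another edge incident to $u$, of the opposite type from $e$.

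More concretely, fix $u \in T$. By definition, there exists a heavy edge $e$ between $u$ and some normal vertex $v \in S$. Suppose first that $e$ is an $A$-edge, i.e.\ $v=au$ for some $a\in A$; the case of a $B$-edge is symmetric. Via the indexation $\phi_u: A \times B \to Q(u)$, the squares of $Q(u)$ containing $e$ are precisely those indexed by the row $\{a\} \times B$ (this is the relation $\phi_u(Q(u)\cap Q(au)) = \{a\}\times B$ recalled in the excerpt). Since $e$ is heavy, at least $\delta \Delta - \Delta^{1/2-\eps}/\delta$ of these squares belong to $x$.

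For each such square $q = \{u, au, ub, aub\} \in x$ indexed by a column $b \in B$, the $B$-edge $(u,ub)$ is incident to $u$ and lies in $E_x$ (as it is an edge of the square $q \in x$). Moreover, since the map $b \mapsto ub$ is injective (it is the right action of distinct elements of $B$ on $u$), these $B$-edges are pairwise distinct as $b$ ranges over the columns for which $q \in x$. This produces at least $\delta\Delta - \Delta^{1/2-\eps}/\delta$ distinct edges of $E_x$ incident to $u$, yielding the desired lower bound on the degree of $u$ in $E_x$. No step is a real obstacle here; the only thing to be careful about is the bookkeeping of which edges correspond to rows versus columns in the local view, which is settled by the TNC incidence identities recalled at the start of Section~\ref{sec:main}.
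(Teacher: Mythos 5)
Your proof is correct and follows exactly the paper's (one-line) argument: a heavy edge at $u$ corresponds to a row (or column) of weight at least $\delta\Delta-\Delta^{1/2-\eps}/\delta$ in the local view at $u$, and each nonzero entry of that row lies in a distinct column, hence yields a distinct edge of $E_x$ incident to $u$. You have simply spelled out the bookkeeping (via $\phi_u$ and group cancellation) that the paper leaves implicit.
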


\begin{proof}
This follows simply from the fact that a row of weight $w$ in a local view is
incident to $w$ columns.
\end{proof}

\begin{claim}
For $\Delta$ large enough, the size of the set $T$ satisfies:
\begin{equation}\label{eqn:T}
|T|\leq \frac{64}{\delta^2\Delta}|S|.
\end{equation}
\end{claim}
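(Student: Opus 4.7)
The plan is to bound the number of $E_x$-edges between $T$ and $S$ in two ways: a lower bound coming from Claim~\ref{claim:degreeT}, and an upper bound via the expander mixing lemma in $\G^\cup$. The hypothesis $|x|<\delta n/(4\Delta^{3/2+\eps})$ enters through the bound on $|S|$ established in~\eqref{eq:|S|}, which is what forces the spectral term of the mixing lemma to dominate the bulk term.

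First I would observe that every edge of $E_x\subseteq E_A\cup E_B$ is bipartite between $V_0$ and $V_1$, and that any edge of $E_x$ with a $V_0$-endpoint has its $V_1$-endpoint in $S$: indeed, if an edge $e$ is incident to a square of $x$, both endpoints of $e$ lie on some square of $x$, so its $V_1$-endpoint belongs to $S$ by definition. Consequently, every $E_x$-edge leaving $T$ lands in $S$, and Claim~\ref{claim:degreeT} yields
\[
|E_x(T,S)|\;\geq\;|T|\bigl(\delta\Delta-\Delta^{1/2-\eps}/\delta\bigr).
\]

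Next I would apply the expander mixing lemma (Lemma~\ref{lem:mixing}) in the graph $\G^\cup$, which is $2\Delta$-regular on $|V|=2|G|$ vertices and, by Lemma~\ref{lem:lambda}, satisfies $\lambda(\G^\cup)\leq 4\sqrt{\Delta}$. Since $E_x\subseteq E(\G^\cup)$, this gives
\[
|E_x(T,S)|\;\leq\;|E(T,S)|\;\leq\;\frac{\Delta}{|G|}|T||S|+4\sqrt{\Delta\,|T|\,|S|}.
\]
Now I would use \eqref{eq:|S|}, which provides $|S|\leq |G|/(2\Delta^{1/2+\eps})$, to bound the bulk term as $\frac{\Delta}{|G|}|S|\leq \Delta^{1/2-\eps}/2$. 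Combining the two inequalities, both small linear-in-$|T|$ corrections $\Delta^{1/2-\eps}/\delta$ and $\Delta^{1/2-\eps}/2$ can be absorbed into $\delta\Delta$ once $\Delta$ is large enough, leaving an inequality of the form
\[
|T|\cdot\frac{\delta\Delta}{2}\;\leq\;4\sqrt{\Delta\,|T|\,|S|}.
\]
Squaring and rearranging gives $|T|\leq 64|S|/(\delta^2\Delta)$, which is the claim.

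The one place where care is needed is the comparison between the mixing bulk term $\frac{\Delta}{|G|}|T||S|$ and the degree lower bound $\delta\Delta|T|$: these are only of the right order thanks to the smallness of $|S|$ guaranteed by the weight hypothesis on $x$ (via \eqref{eq:|S|}), without which $T$ could be as large as $V_0$. This is the main conceptual input; once it is in place, the rest is a standard expander-mixing calculation, and the absorption of the lower-order $\Delta^{1/2-\eps}$ terms into $\delta\Delta$ is routine for $\Delta$ sufficiently large.
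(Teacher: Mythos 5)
Your proof is correct and follows essentially the same route as the paper's: a lower bound on $|E(S,T)|$ from Claim~\ref{claim:degreeT}, an upper bound from the expander mixing lemma in the $2\Delta$-regular graph $\G^\cup$ with $\lambda(\G^\cup)\leq 4\sqrt{\Delta}$, and the bound \eqref{eq:|S|} on $|S|$ to make the bulk term a lower-order correction that is absorbed for $\Delta$ large. Your explicit observation that every $E_x$-edge incident to $T$ has its $V_1$-endpoint in $S$ is a detail the paper leaves implicit, and is a welcome addition.
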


\begin{proof}
From Claim~\ref{claim:degreeT} we have the following lower bound on the number
of edges of $\G^\cup$ between $S$ and $T$,
\[ |E(S,T)| \geq \delta \Delta \Big(1-  \frac{1}{\delta^2 \Delta^{1/2+\eps}}\Big) |T|.\]
Applying the Expander mixing Lemma~\ref{lem:mixing} to $E(S,T)$, we have, from
Lemma~\ref{lem:lambda},
\[
|E(S,T)|\leq \frac{4\Delta}{|V|}|S||T| + 4\sqrt{\Delta}\sqrt{|S||T|}.
\]
Recalling \eqref{eq:|S|}, we have $|S| \leq
|V|/8\Delta^{1/2+\eps}$, we have therefore
\[
|T|\delta\Delta\left(1-\frac{1}{\delta^2\Delta^{1/2+\eps}}-\frac{1}{2\delta\Delta^{1/2+\eps}}
\right)\leq 4\sqrt{\Delta}\sqrt{|S||T|}
\]
which implies 
\[
\delta\Delta\sqrt{|T|}\leq 8\sqrt{\Delta}\sqrt{|S|}
\]
for $\Delta$ large enough, from which the claim follows.
\end{proof}

From this last claim we infer that a typical vertex in $T$ must be adjacent to a large number of vertices in $S$, 
linear in $\Delta$, which means that its local view should consist of many
columns (or rows) containing almost undisturbed codewords of $C_A$ (or $C_B$). The robustness of $C_A \otimes C_B$ then implies that the local view of such a vertex must be close to a codeword of the tensor code, and that adding the corresponding word will decrease the weight of $x$. 
We now detail the argument.

Define $\bar{d}_T$ to be the average (over $T$) number of heavy edges incident
to a 
vertex of $T$. The number of edges between $T$ and $S \setminus S_e$ is $|E(T,
S\setminus S_e)| = \bar{d}_T |T| \geq |S| - |S_e|$. Applying~\eqref{eqn:Se}
and~\eqref{eqn:T}, we have
\[ \bar{d}_T \geq \frac{|S|-|S_e|}{|T|}\geq
\frac{\delta^2}{64}\left(1-\frac{64}{\Delta^{1-2\eps}}\right)\Delta=: 2\alpha \Delta. \]

Let $\eta$ be the fraction of vertices of $T$ with degree greater than $\alpha \Delta$.
Since the maximum degree of a vertex in $\G^{\cup}$ is $2\Delta$, it holds that
\[ 2\alpha \Delta \leq \bar{d}_T  \leq 2 \Delta \eta + (1-\eta)\alpha \Delta\]
and $\eta \geq \alpha/(2-\alpha) \geq \alpha/2$. We have just shown:

\begin{claim}\label{claim:heavy}
At least a fraction $\alpha/2$ of vertices of $T$ are incident to at
least $\alpha\Delta$ heavy edges.
\end{claim}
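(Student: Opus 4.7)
The plan is a straightforward averaging (Markov-type) argument. First I would combine the lower bound $\bar{d}_T \geq 2\alpha\Delta$ that was just derived (from the upper bounds on $|S_e|$ and $|T|$) with the trivial pointwise observation that every vertex of $V_0$ has at most $2\Delta$ heavy edges incident to it --- this follows because $\G^\cup = \G_A \cup \G_B$ is the union of two $\Delta$-regular graphs, so the degree in $\G^\cup$ of any vertex is exactly $2\Delta$, and in particular the number of heavy edges at any vertex cannot exceed $2\Delta$.

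Next, letting $\eta \in [0,1]$ denote the fraction of vertices of $T$ whose heavy-edge degree is strictly greater than $\alpha\Delta$, I would split the average over the $\eta$-fraction (each contributing at most $2\Delta$) and the $(1-\eta)$-fraction (each contributing at most $\alpha\Delta$), which yields
\[ 2\alpha\Delta \;\leq\; \bar{d}_T \;\leq\; \eta\cdot 2\Delta + (1-\eta)\cdot \alpha\Delta. \]
Rearranging gives $\alpha \leq (2-\alpha)\eta$, hence $\eta \geq \alpha/(2-\alpha)$; since $\alpha \leq 1$, this is at least $\alpha/2$.

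There is no real obstacle: the claim is essentially just one application of Markov's inequality, and all the substantive work was already done in establishing $\bar{d}_T \geq 2\alpha\Delta$ through expander mixing in $\G_1^\square$ (to bound $|S_e|$ via Claim~\ref{claim:Se}) and in $\G^\cup$ (to bound $|T|$). The reason to isolate this step as its own claim is that it provides the entry point for the remainder of the proof of Theorem~\ref{thm:main}: among the many vertices of $T$ with at least $\alpha\Delta$ heavy edges, one would next invoke the robustness of $C_A\otimes C_B$ (via Proposition~\ref{prop:tensor}) applied to the local view at such a vertex, to produce a codeword $y$ of the tensor code that is close to that local view and hence decreases $|x|$.
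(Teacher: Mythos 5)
Your proof is correct and is essentially identical to the paper's: the same averaging argument bounding $\bar{d}_T$ between $2\alpha\Delta$ and $2\Delta\eta+(1-\eta)\alpha\Delta$, with the same use of the degree bound $2\Delta$ in $\G^\cup$ and the same conclusion $\eta\geq\alpha/(2-\alpha)\geq\alpha/2$.
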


We will need to single out a vertex $v$ of $T$ whose existence is guaranteed by
claim~\ref{claim:heavy}, but we also need this vertex to not be incident to too
many exceptional vertices of $S$. To this end we estimate the total number of
edges of $\G^\cup$ between $T$ and $S_e$.

From the Expander mixing Lemma~\ref{lem:mixing}, 
\begin{align}
E(S_e,T)&\leq \frac{4\Delta}{|V|}|S_e||T| +
4\sqrt{\Delta}\sqrt{|T||S_e|}\nonumber\\
&\leq \frac{256\Delta^{2\eps}}{|V|}|T||S| +
32\Delta^{\eps}\sqrt{|T||S|}\nonumber\\
&\leq \frac{32}{\Delta^{1/2-\eps}}|T|+32\Delta^{\eps}\sqrt{|T||S|}\label{eq:32}
\end{align}
by first applying~\eqref{eqn:Se}, and then~\eqref{eq:|S|}. Now every vertex of
$S\setminus S_e$ is, by definition of $T$, adjacent to a vertex of $T$ in
$\G^\cup$. Since the degree of $\G^\cup$ is $2\Delta$ we get that $|S|-|S_e|\leq
2\Delta|T|$. From \eqref{eqn:T}, we have that, for $\Delta$ large enough,
$|S_e|\leq |S|/2$, whence $|S|\leq 4\Delta |T|$. From \eqref{eq:32} we therefore
obtain
\[
E(S_e,T)\leq \beta\Delta^{1/2+\eps}|T|
\]
with $\beta=64+32/\Delta$. We therefore have that at most an $\alpha/4$
proportion of vertices of $T$ are adjacent to more than $\frac
4\alpha\beta\Delta^{1/2+\eps}$ vertices of $S_e$. Summarising, we have shown,
together with claim~\ref{claim:heavy}

\begin{claim}\label{claim:alpha/4}
At least a fraction $\alpha/4$ of vertices of $T$ 
\begin{itemize}
\item are incident to at
least $\alpha\Delta$ heavy edges.
\item are adjacent to at most $d_1=\frac{4\beta}{\alpha}\Delta^{1/2+\eps}$ vertices of $S_e$. 
\end{itemize}
\end{claim}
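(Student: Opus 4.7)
My plan is to verify the two stated properties of Claim~\ref{claim:alpha/4} separately and then combine them by a simple inclusion-exclusion. Property~(i) is already in hand: Claim~\ref{claim:heavy} guarantees that a fraction at least $\alpha/2$ of the vertices of $T$ are incident to $\geq \alpha\Delta$ heavy edges in $\G^\cup$. The real work will be to show that at most a fraction $\alpha/4$ of $T$ can be adjacent in $\G^\cup$ to more than $d_1 = (4\beta/\alpha)\Delta^{1/2+\eps}$ exceptional vertices of $S_e$; once this is shown, a union bound will produce the desired $\alpha/2 - \alpha/4 = \alpha/4$ fraction of vertices of $T$ satisfying both conditions simultaneously.

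To bound the number of $S_e$-heavy vertices in $T$, I would first control the total edge count $|E(S_e,T)|$ in $\G^\cup$ via the Expander Mixing Lemma~\ref{lem:mixing} together with the spectral bound $\lambda(\G^\cup) \leq 4\sqrt{\Delta}$ from Lemma~\ref{lem:lambda}. Substituting the previously established bounds $|S_e| \leq (64/\Delta^{1-2\eps})|S|$ from Claim~\ref{claim:Se} and $|S| \leq |V|/(4\Delta^{1/2+\eps})$ from~\eqref{eq:|S|}, the mean-field contribution collapses to a term of order $\Delta^{\eps-1/2}|T|$, while the discrepancy term takes the form $O(\Delta^{\eps})\sqrt{|S||T|}$. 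The remaining task is then to trade the $\sqrt{|S|}$ factor for $\sqrt{\Delta|T|}$: this follows from the observation that, by the defining property of $T$, every vertex of $S \setminus S_e$ is adjacent in $\G^\cup$ to at least one vertex of $T$, so the $2\Delta$-regularity of $\G^\cup$ gives $|S \setminus S_e| \leq 2\Delta|T|$, and then the estimate $|S_e| \leq |S|/2$ (valid for $\Delta$ large by Claim~\ref{claim:Se}) upgrades this to $|S| \leq 4\Delta|T|$. Altogether one arrives at $|E(S_e,T)| \leq \beta\Delta^{1/2+\eps}|T|$ with $\beta = 64 + O(1/\Delta)$.

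With this edge bound, the conclusion is a one-line Markov argument: the fraction of vertices of $T$ with more than $d_1 = (4\beta/\alpha)\Delta^{1/2+\eps}$ neighbours in $S_e$ is at most $|E(S_e,T)|/(d_1|T|) \leq \alpha/4$. Intersecting this with the $\alpha/2$ fraction supplied by Claim~\ref{claim:heavy} yields the claim. I expect the only mildly delicate step to be the inequality $|S| \leq 4\Delta|T|$: without it the error term in the mixing bound would retain a $\sqrt{|S|}$ factor that is too large to absorb into $O(\Delta^{1/2+\eps})|T|$. Fortunately the defining heavy-edge property of $T$, combined with the smallness of $S_e$ already established in Claim~\ref{claim:Se}, makes this step essentially automatic, so the whole claim should drop out cleanly once the expander-mixing bookkeeping is carried out.
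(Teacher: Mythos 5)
Your proposal is correct and follows essentially the same route as the paper: the expander mixing bound on $|E(S_e,T)|$ in $\G^\cup$, the substitutions from Claim~\ref{claim:Se} and \eqref{eq:|S|}, the conversion $|S|\leq 4\Delta|T|$ via the covering property of $T$ and $|S_e|\leq|S|/2$, a Markov/averaging step, and a union bound with Claim~\ref{claim:heavy}. Nothing is missing.
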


Pick any vertex $v$ whose existence is guaranteed by claim~\ref{claim:alpha/4}.
The local view at $v$ is illustrated on Figure~\ref{fig:local}.

\begin{figure}[h!]
\begin{center}
\begin{tikzpicture}
\draw (0,0) rectangle (3,3); 
\draw[fill=black!15] (0.2,0) -- (0.2,2.4) -- (0,2.4) -- (0,2.7) -- (0.2,2.7) --
(0.2,3) -- (0.5,3) -- (0.5,2.7) -- (3,2.7) --
(3,2.4) -- (0.5,2.4) -- (0.5,0) -- (0.2,0);
\node at (-0.5,1.5) {$A$};
\node at (1.5,3.5) {$B$};
\end{tikzpicture}
\end{center}
\caption{The local view at $v$ from Claim~\ref{claim:alpha/4}. There are at most
$d_1\leq \Delta^\gamma$ rows
and $d_1$ columns (shaded) that are shared with local
views from exceptional vertices of $S$. All other rows and columns are shared
with normal vertices of $S$, and differ from a codeword of $C_B$ or $C_A$ on at
most $\Delta^{1/2-\eps}/\delta$ coordinates. \label{fig:local}}
\end{figure}
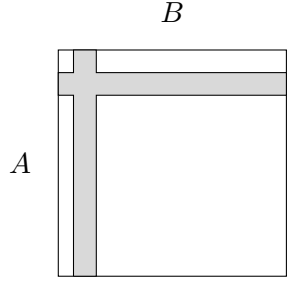

We may therefore define two subsets $A'\subset A$, $B'\subset B$, with
$|A'|=|B'|=\Delta-d_1$, and such that in the view of $v$, all rows and columns indexed by
$A'$ and $B'$ are normal, \textit{i.e.}\ are not shared with the local views of vertices
in $S_e$.
Observe that for $\Delta$ large enough
we have $d_1\leq\Delta^\gamma$, since we have imposed $\gamma>1/2+\eps$. The
theorem's hypothesis on resistance to puncturing implies therefore that
the punctured code $C_{A'} \otimes \f_2^{B'} + \f_2^{A'} \otimes C_{B'}$ is
$\Delta^{3/2-\eps/2}$-robust.

Denote by $x_v\in\f_2^{A\times B}$ the local view of $x$ at vertex $v$, 
and by $x_v'$ its restriction to coordinates in $A'\times B'$.
Proposition~\ref{prop:tensor}, applied for large enough $\Delta$, states that there exists a codeword $c' \in C_{A'}
\otimes C_{B'}$ of the punctured code $C_{A'} \otimes C_{B'}$ close to 
$x_v'$: specifically,
\[ d(x_v' , c') \leq  \frac{3}{2}  \left(d(x_v', C_{A'} \otimes \f_2^{B'}) +
d(x_v',  \f_2^{A'}\otimes C_{B'})\right) \leq 3  \Delta \frac{\Delta^{1/2-\eps}}{\delta}\]
since each column (or row) of $x_v$ is at distance at most
$\frac{\Delta^{1/2-\eps}}{\delta}$ from $C_A$ (or $C_B$), and there are not more
than $\Delta$ columns (rows) altogether.

Since $d_1<\delta\Delta$, there is a unique tensor codeword $c\in C_A \otimes
C_B$ that coincides with $c'$ on $A'\times B'$, and we have, since $|(A\times B)
\setminus (A'\times B')| \leq 2 d_1 \Delta$,
\[ d(x_v, c) \leq d(x_v', c') + 2 d_1 \Delta.\]
This last upper bound scales like $\Delta^{3/2+\eps}$, but the first point of
Claim~\ref{claim:alpha/4} implies that the weight of $x_v$ is a quantity that
scales like $\Delta^2$, so for large enough values of $\Delta$ it must be that
$d(x_v,c)=|x_v+c|<|x_v|$ (implying in particular that $c$ is non-zero).
Defining the vector $y\in\f_2^Q$ to coincide with $c$ on the $Q$-neighbourhood
of $v$ and to be zero elsewhere, we have just shown that
$|x|-|x+y|=|x_v|-|x_v+c|$ is positive, which concludes the proof of the theorem.
\end{proof}

\begin{rem}\label{rem:a}
The last part of the proof of Theorem~\ref{thm:main} shows that
when $\Delta$ is large enough, not only do we have $|x|-|x+y| >0$ for $y$ thus
constructed, but there is furthermore a constant $a$ such that we have
$|x|-|x+y| >a\Delta^2.$ 
\end{rem}

\begin{rem}\label{rem:negative}
Theorem~\ref{thm:main} stays valid for negative values of $\eps$, \textit{i.e.}
when $\eps\in (-1/2,0)$. In this case however, there is no need for any resistance
to puncturing, because the exceptional rows and columns at the local view of $v$
from Claim~\ref{claim:alpha/4} will number $o(\Delta^{3/2})$ and contribute a
negligible amount to the distance between $x_v$ and the tensor codeword $c$.
\end{rem}

\subsection{Consequences of Theorem~\ref{thm:main}: asymptotically good quantum LDPC codes}
\label{sec:goodLDPC}
The following theorem is a direct consequence of Theorem~\ref{thm:main}.

\begin{theo}\label{thm:123}
Fix $\rho\in(0,1/2)$, $\eps \in (0,1/2)$, $\gamma\in (1/2+\eps,1)$ and $\delta>0$. 
If $\Delta$ is large enough and $C_A$ and $C_B$ are codes of length $\Delta$
such that
\begin{enumerate}
\item $0<\dim C_A\leq\rho\Delta$ and $\dim C_B=\Delta-\dim C_A$,
\item the minimum distances of $C_A,C_B,C_A^\perp,C_B^\perp$ are all
$\geq\delta\Delta$,
\item \label{cond3} both dual tensor codes $C_0^\perp=(C_A\otimes C_B)^\perp$ and
$C_1^\perp=(C_A^\perp\otimes C_B^\perp)^\perp$ are $\Delta^{3/2-\eps/2}$-robust
with $\Delta^\gamma$-resistance to puncturing,
\end{enumerate}
then the quantum code $\Q=(\C_0,\C_1)$ defined in \eqref{eq:tanner} has 
parameters
\[
\llbracket n, k\geq (1-2\rho)^2n, d\geq \frac{\delta}{4\Delta^{3/2+\eps}}n
\rrbracket .
\]
\end{theo}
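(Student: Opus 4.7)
The plan is to read off the three parameters of $\Q$ one at a time, with the distance bound being the substantive part and a direct corollary of Theorem~\ref{thm:main}. The length $n = |Q| = \Delta^2 |G|/2$ is immediate from the construction in \eqref{eq:tanner}. For the rate I would redo the generator count already sketched around \eqref{eq:rate}: writing $k_A = \dim C_A$ and $\dim C_B = \Delta - k_A$, both sets of generators have cardinality $|G|\, k_A(\Delta - k_A)$, so
\[
\dim \Q \;\geq\; \frac{\Delta^2 |G|}{2} - 2|G|\,k_A(\Delta - k_A) \;=\; \frac{|G|}{2}(\Delta - 2k_A)^2 \;=\; \Bigl(1 - \frac{2 k_A}{\Delta}\Bigr)^{\!2} n \;\geq\; (1-2\rho)^2 n,
\]
the last inequality using condition~1 together with $\rho < 1/2$ and the fact that $(1-2x)^2$ is decreasing on $[0,1/2]$.

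The distance follows from two symmetric applications of Theorem~\ref{thm:main}. To bound $d_Z$, I would pick any nonzero $x \in \C_1$ with $|x| < \delta n / 4\Delta^{3/2+\eps}$. Conditions~2 and~\ref{cond3} (note that $\Delta^{3/2+\eps}$-robustness is a stronger condition than the $\Delta^{3/2-\eps}$-robustness actually required, since $w$-robustness is monotone in $w$) supply the hypotheses of Theorem~\ref{thm:main} for $\C_1$. The theorem then produces a vector $y \in \f_2^Q$ supported on the link $Q(v)$ of some $v \in V_0$, coinciding there with a codeword of $C_0 = C_A \otimes C_B$, and satisfying $|x + y| < |x|$. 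The key observation is that such a $y$ is by construction a $Z$-generator, so $y \in \C_0^\perp$ and $x + y \in \C_1$. Since the Hamming weight is a non-negative integer that strictly decreases at each step and stays below the threshold, the iteration must terminate at the zero vector: after finitely many steps, $x = y_1 + \cdots + y_k$ with each $y_i \in \C_0^\perp$, so $x \in \C_0^\perp$. This yields $d_Z \geq \delta n / 4\Delta^{3/2+\eps}$.

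For $d_X$, I would apply the same argument to $\C_0 = T(\G_0^\square, C_0^\perp)$ after swapping the roles of $C_A \leftrightarrow C_A^\perp$, $C_B \leftrightarrow C_B^\perp$, and $\G_0^\square \leftrightarrow \G_1^\square$. Under this involution the statement of Theorem~\ref{thm:main} is unchanged, and its hypotheses are exactly the other half of conditions~2 and~\ref{cond3}: minimum distances of $C_A^\perp, C_B^\perp$ at least $\delta \Delta$, and the robustness (with resistance to puncturing) of $C_0^\perp = C_A^\perp \otimes \f_2^B + \f_2^A \otimes C_B^\perp$. The vectors $y$ produced are now $X$-generators, hence lie in $\C_1^\perp$, and the same iteration shows $d_X \geq \delta n / 4\Delta^{3/2+\eps}$. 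Together with $d = \min(d_X, d_Z)$, this gives the claimed distance bound. There is no genuine obstacle here: the only step requiring care is recognising that the tensor codeword produced by Theorem~\ref{thm:main} is by design a CSS generator, which is what lets the iteration stay within a single coset of $\C_0^\perp$ (respectively $\C_1^\perp$) and thereby certify membership in that coset.
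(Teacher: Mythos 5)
Your proof is correct and follows essentially the same route as the paper: the dimension bound is exactly the generator count behind \eqref{eq:rate}, and the distance bound comes from iterating Theorem~\ref{thm:main} on $\C_1$ and, by the $C_A\leftrightarrow C_A^\perp$, $C_B\leftrightarrow C_B^\perp$ symmetry, on $\C_0$, using that the vector $y$ produced is a CSS generator so the iterate stays in the same coset. Your side remark that the $\Delta^{3/2+\eps}$-robustness of condition~3 subsumes the $\Delta^{3/2-\eps}$-robustness actually invoked in Theorem~\ref{thm:main} is a correct reading of the hypotheses as stated.
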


\begin{proof}
The statement on the dimension $k$ is formula~\eqref{eq:rate}.
Recall that the minimum distance of the quantum code $\Q$ is the smallest weight
of a vector that is either in $\C_0\setminus \C_1^\perp$ or in $\C_1\setminus
\C_0^\perp$. Let $x\in\C_1$ be a non-zero vector of weight $<\delta
n/4\Delta^{3/2+\eps}$.
The vector $y$ guaranteed by Theorem~\ref{thm:main} is in
$\C_0^\perp\subset\C_1$, and since $|x+y|<|x|$, we may apply repeatedly
Theorem~\ref{thm:main} to create a sequence of vectors of $\C_1$ of decreasing
weights, and ultimately obtain that $x$ must be a sum of vectors of
$\C_0^\perp$. The same argument applies symmetrically to the weight of vectors
of $\C_0\setminus\C_1^\perp$, since we have imposed that $C_A^\perp$ and
$C_B^\perp$ also satisfy the hypotheses of Theorem~\ref{thm:main}.
\end{proof}

To obtain asymptotically good families of quantum codes it remains to show that
codes $C_A,C_B$ satisfying the conditions 1,2,3 of Theorem~\ref{thm:123}
exist. This is achieved through the random choice result of
Theorem~\ref{thm:P-robust}.

Specifically, we have:

\begin{theo}\label{thm:goodcodes}
Fix $\rho\in(0,1/2)$, $\eps \in (0,1/2)$, $\gamma\in (1/2+\eps,1)$ and
$\delta >0$ such that $-\delta\log_2\delta-(1-\delta)\log_2(1-\delta)<\rho$.
Fix some large enough $\Delta$ and let $r=\lfloor\rho\Delta\rfloor$. Let $C_A$
be the random code defined by a random uniform $r\times\Delta$ generator matrix
and let $C_B$ be the random code defined by a random uniform $r\times\Delta$
parity-check matrix. With non-zero probability $C_A$ and $C_B$ satisfy
conditions 1,2,3 of Theorem~\ref{thm:123} yielding an infinite family of quantum
codes of parameters
 \[
\llbracket n, k\geq (1-2\rho)^2n, d\geq \frac{\delta}{4\Delta^{3/2+\eps}}n
\rrbracket .
\]
\end{theo}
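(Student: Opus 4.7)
The plan is to verify that with probability tending to $1$ as $\Delta \to \infty$, the random pair $(C_A, C_B)$ satisfies conditions 1--3 of Theorem~\ref{thm:123} simultaneously, so that these conditions are satisfied with positive probability for $\Delta$ large enough. Theorem~\ref{thm:123} applied to the infinite family of left-right Cayley complexes from Section~\ref{subsec:LRCayley} (over $G = \PSL_2(q^i)$ with $\Delta = q+1$) then produces an infinite family of quantum codes of length $n = \Delta^2 |G|/2 \to \infty$ with the claimed parameters.

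Conditions 1 and 2 are handled by classical arguments. A uniform random $r \times \Delta$ matrix over $\f_2$ has full rank with probability $\geq 1 - 2^{r-\Delta}$, giving $\dim C_A = r$ and $\dim C_B = \Delta - r$ with probability tending to $1$, which yields condition 1. For condition 2, I would use first-moment (Gilbert--Varshamov) estimates, writing $H_2(\delta) = -\delta\log_2\delta-(1-\delta)\log_2(1-\delta)$. For $C_A$ with random uniform generator $G$, any fixed $u \in \f_2^r \setminus \{0\}$ yields $uG$ uniform in $\f_2^\Delta$, and a union bound gives $\prob{d(C_A) < \delta\Delta} \leq 2^{\Delta(\rho + H_2(\delta) - 1)} \to 0$ since $H_2(\delta) < \rho < 1-\rho$. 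Dually, a fixed nonzero $x \in \f_2^\Delta$ lies in $C_A^\perp = \ker G$ with probability $2^{-r}$, so the expected number of codewords of $C_A^\perp$ of weight $<\delta\Delta$ is at most $2^{\Delta(H_2(\delta) - \rho)} \to 0$ by the hypothesis $H_2(\delta) < \rho$. Analogous bounds hold for $C_B$ and $C_B^\perp$ after swapping the roles of generator and parity-check descriptions.

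For condition 3, Theorem~\ref{thm:P-robust} applied with $(\rho_A, \rho_B) = (\rho, 1-\rho)$ directly shows that $C_0^\perp = C_A \otimes \f_2^B + \f_2^A \otimes C_B$ is $\Delta^{3/2-\eps}$-robust with $\Delta^\gamma$-resistance to puncturing with probability tending to $1$. The key subtlety is handling $C_1^\perp = C_A^\perp \otimes \f_2^B + \f_2^A \otimes C_B^\perp$: by the deliberate choice of construction, $C_A^\perp = \ker G$ is defined by the random uniform parity-check matrix $G$ (of $r$ rows), while $C_B^\perp = \mathrm{rowspace}(H)$ by the random uniform generator matrix $H$. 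Thus the \emph{ordered} pair $(C_B^\perp, C_A^\perp)$ satisfies exactly the hypotheses of Theorem~\ref{thm:P-robust} with $(\rho_A, \rho_B) = (\rho, 1-\rho)$. Since both robustness and puncturing resistance are preserved under transposition of the $A \times B$ array, the same conclusion holds for $C_1^\perp$ itself.

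A union bound over the finitely many failure events from all three conditions yields simultaneous satisfaction with probability tending to $1$, hence with positive probability for $\Delta$ large enough. The main obstacle is precisely the duality step in condition 3: the reason for picking $C_A$ with a random generator matrix and $C_B$ with a random parity-check matrix (rather than, say, both with random generators) is exactly to ensure that the pair $(C_A^\perp, C_B^\perp)$ also has the joint distribution to which Theorem~\ref{thm:P-robust} applies. Once this observation is made, the remainder of the proof is standard random coding theory combined with the constructive input of Theorem~\ref{thm:P-robust}.
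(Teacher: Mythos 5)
Your proposal is correct and follows essentially the same route as the paper's proof: full rank of a uniform random matrix for condition 1, Gilbert--Varshamov first-moment bounds for condition 2, and for condition 3 the observation that the deliberate choice of one random generator matrix and one random parity-check matrix makes the dual pair $(C_A^\perp, C_B^\perp)$ (up to reordering/transposition) have the same distribution to which Theorem~\ref{thm:P-robust} applies. Your treatment is somewhat more explicit than the paper's (which states the union bounds and the transposition symmetry only implicitly), but the substance is identical.
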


\begin{proof}
It is well-known that for $r<\Delta$, a random uniform $r\times\Delta$ matrix has rank $r$ with
probability at least $1-1/2^{\Delta-r}$ so condition $1$ will hold with
probability tending to $1$ when $\Delta$ goes to infinity.

The value $\delta$ has been chosen to be below the Gilbert-Varshamov bound for
the imposed rates of all $4$ codes $C_A,C_B,C_A^\perp,C_B^\perp$, so all their
minimum distances will be $\geq\delta\Delta$ with probability tending to $1$
when $\Delta$ goes to infinity.

Finally, both the pair $(C_A,C_B)$ and the pair $(C_A^\perp,C_B^\perp)$ are
chosen according to the same probability distribution, namely one random uniform
generator matrix and one random uniform parity-check matrix, therefore, with
probability tending to $1$ when $\Delta$ goes to infinity, both dual
tensor codes $C_0^\perp$ and $C_1^\perp$ will satisfy the conclusion of
Theorem~\ref{thm:P-robust} in other words condition \ref{cond3}.
\end{proof}

\paragraph{An upper bound on the minimum distance of $\Q$.} 
To evaluate the tightness of the lower bound on the minimum distance $d$ of the quantum code $\Q$
given by Theorem~\ref{thm:123}, we now derive the following upper bound on $d$.
\begin{prop}\label{prop:n/Delta}
The minimum distance $d$ of the quantum code $\Q$
is not more than $n/\Delta$.
\end{prop}

Recalling Remark~\ref{rem:negative}, we have that if we could
find the required dual tensor codes $C_0^\perp$ and $C_1^\perp$ that are both
$\Delta^{3/2-\eps}$-robust for $\eps\to -1/2$, we would obtain a lower bound on
the quantum code minimum distance that practically closes the gap with the upper
bound of Proposition~\ref{prop:n/Delta}.

To prove Proposition~\ref{prop:n/Delta} we shall show the existence of
small-weight ($\leq n/\Delta$) codewords of $\C_1\setminus \C_0^\perp$. We do this by constructing
many small-weight codewords of $\C_1$, that are too numerous to all be in
$\C_0^\perp$. Those codewords are best described when using the quadripartite
version of the Left-Right Cayley complex described at the end of
Section~\ref{subsec:LRCayley}. In this case the graph $\G_1^\square$ is
bipartite, with its vertex set $V_1$ split into the disjoint union of two sets $V_{10}$ and $V_{01}$.
We have that for every vertex $v\in V_1$, all squares in its local view that are
indexed by row $a$, are indexed by row $a^{-1}$ in all neighbouring local views.
Therefore, if for every vertex $v\in V_{10}$ we restrict its $Q$-neighbourhood
$Q(v)$ to row $a$ for a fixed $a$, and similarly restrict all $Q$-neighbourhoods 
of vertices of $V_{01}$ to row $a^{-1}$, we obtain a subgraph $\G_{1a}^\square$
of $\G_1^\square$ that is a copy of the double cover of the graph $\Cay(G,B)$.
Furthermore, the edge set of $\G_1^\square$ is the disjoint union of the edge
sets of the graphs $\G_{1a}^\square$ for $a$ ranging in $A$.

Now we see that any codeword of the Tanner code
$T(\G_{1a}^\square,C_B)$ yields a codeword of $\C_1$, where every non-zero local view at
any vertex of $V_{10}$ must be entirely supported by row $a$ (and row $a^{-1}$
for vertices of $V_{01}$) and coincide with a codeword of $C_B$. Any such
codeword has weight at most $n/\Delta$: expanding
$T(\G_{1a}^\square,C_B)$ to the whole index set $\G_1^\square$ by padding its
words with $0$s, we may define the disjoint direct sum
\[
\eL = \sum_{a\in A}T(\G_{1a}^\square,C_B).
\]
We claim that:
\begin{lemma}
\label{lem:eL}
$\dim\eL\cap\C_0^\perp \leq 2\dim C_A.\dim T(\G_{1a}^\square,C_B)$ for any given
$a\in A$.
In particular when $\dim C_A<\Delta/2$ we have  $\eL\not\subset\C_0^\perp$.
\end{lemma}
The second claim of Lemma~\ref{lem:eL} follows from the first, since clearly
$\dim\eL=\Delta\ell$ where $\ell$ is the common dimension of the $\Delta$
isomorphic Tanner codes $T(\G_{1a}^\square,C_B)$.
Lemma~\ref{lem:eL} implies therefore that at least
one non-zero Tanner codeword in some $T(\G_{1a}^\square,C_B)$ (which must be of weight
$\leq n/\Delta$) cannot be in $\C_0^\perp$ which proves
Proposition~\ref{prop:n/Delta}. It remains therefore to prove
Lemma~\ref{lem:eL}.
We will need the following easy fact on Tanner codes on bipartite graphs:

\begin{lemma}
\label{lem:tannereasy}
Let $\G=(W,E)$ be a regular bipartite graph on the vertex set $W=W_0\cup W_1$, and let
$C=T(\G,C_0)$ be a Tanner code on it for some inner code $C_0$. Let
$x=\sum_{w\in W}c_v$ be a vector in $\f_2^E$ such that every $c_w$ is a word
supported by the edge-neighbourhood $E(w)$ of $w$ and that coincides with a codeword of
$C_0$ on $E(w)$. Then if $x$ is a Tanner codeword of $C$, then so are the
partial sums $\sum_{w\in W_0}c_w$ and $\sum_{w\in W_1}c_w$.
\end{lemma}

\begin{proof}
We have that $x=\sum_{w\in W_0}x_w=\sum_{w\in W_1}x_w$ where $x_w$ is the local
view of $x$ at vertex $w$. Therefore $\sum_{w\in W}c_w+\sum_{w\in W_1}x_w=0$,
so that 
\[
\sum_{w\in W_0}c_w =\sum_{w\in W_1}c_w+x_w
\]
and therefore $\sum_{w\in W_0}c_w$ is a vector whose local views, both at
vertices of $W_0$ and at vertices of $W_1$, are all in $C_0$, so $\sum_{w\in W_0}c_w$ 
is a Tanner codeword. Similarly, so is $\sum_{w\in W_1}c_w$.
\end{proof}

\begin{proof}[Proof of Lemma~\ref{lem:eL}]
Let $e_1,\ldots e_k$ be a basis of $C_A$ corresponding to some information
set $a_1,\ldots ,a_k$, for $k=\dim C_A$. This means that $e_i$ has value $1$ on
index $a_i$ and has value $0$ on all indices $a_j, j\neq i$. 

Let $x\in\eL\cap\C_0^\perp$. Since $x\in\C_0^\perp$, it decomposes as
\begin{equation}\label{eq:tannerdecomp}
x = \sum_{i=1}^k\sum_{v\in V_{00}}e_i\otimes x_v^i + \sum_{i=1}^k\sum_{w\in
V_{11}}e_i\otimes x_w^i.
\end{equation}
where $x_v^i,x_w^i$ are all codewords of $C_B$. We have abused notation somewhat
by identifying the local view of $x$ at a vertex $v$ with its decomposition
expressed as a tensor codeword. But the point is that every term of the form 
$e_i\otimes x_v^i$ contributes to the sum \eqref{eq:tannerdecomp} a codeword of $C_B$
to the local view at $v$ on row $a_i$. Since we must see on column $a_i$ a
Tanner codeword of $T(\G_{1a_i}^\square,C_B)$ (because $x\in\eL$), we have from 
Lemma~\ref{lem:tannereasy} that the whole sum
\[
\sum_{v\in V_{00}}e_i\otimes x_v^i 
\]
must be equal, when restricted to row $a_i$, to a codeword of
$T(\G_{1a_i}^\square,C_B)$. Therefore, the left part of the
sum~\eqref{eq:tannerdecomp} must belong to a sum of $\dim C_A$ codes all isomorphic to
a $T(\G_{1a}^\square,C_B)$ Tanner code, and by a similar argument, so must the
right part of the sum, which concludes the proof.
\end{proof}

\subsection{Consequences of Theorem~\ref{thm:main}: classical locally testable codes}

Let $X$ be the left right Cayley complex of Theorem~\ref{thm:main}, 
and let $C_A = C_B$ be a code of length $\Delta$, rate $\rho \in (0,1)$ and distance
$\delta \Delta$ and suppose that the associated dual tensor code
$C_A\otimes\f_2^B+\f_2^A\otimes C_B$ satisfies
Theorem~\ref{thm:main}. 

 Let $C_0=C_A\otimes C_B$:
the Tanner code $\C = T(\G_0^\square, C_0)$ is precisely the locally testable
code defined of Dinur \textit{et al.}\ and is shown (\cite{DEL21}) to have parameters
\[ [n, k \geq (2\rho^2-1) n, d \geq \delta^2(\delta- \lambda/\Delta)n]\]
where $\lambda$ is the (common) second largest eigenvalue of the constituent
Ramanujan Cayley graphs.

Recall from \cite{DEL21} the tester for this code: given a word $x \in \f_2^Q$, pick a random vertex $v \in V_0$, accept if the local view $x_v$ belongs to the tensor code $C_0$ and reject if $x_v \notin C_0$.
Let $T \subset V_0$ be the set of vertices for which the local test rejects,
\begin{align*} 
 T = \{ v \in V_0 \: : \: x_v \notin C_0\}.
\end{align*}
The fraction of rejecting local tests is therefore $\zeta(x) := \frac{|T|}{|V_0|}$.

Recall that a code is said to be locally testable with $q$ queries and detection
probability $\kappa$, if the tester accesses at most $q$ bits from $x$, always
accepts when $x$ is a codeword, and otherwise satisfies 
\begin{equation}\label{eq:testable}
 \zeta(x) \geq \kappa \frac 1n d(x,\C).
\end{equation}
In the present case, the number of queries is $q=\Delta^2$, the size of the
$Q$-neighbourhood of a vertex. The goal is to establish \eqref{eq:testable} for
some constant $\kappa$.

The strategy to establish the local testability of the code $\C$ is to define a
decoder that is guaranteed to always find a codeword close to $x$ if $\zeta(x)$ (or $|T|$) is sufficiently small.
The difference with a classical decoder is that we make no assumption on how far
$x$ actually is from the code $\C$.

Theorem~\ref{thm:main} can be converted into such a decoding algorithm. Let $x
\in \f_2^Q$ be an initial vector. Our goal is to find a close enough codeword $c \in \C$. 
For each vertex $v \in V_0$, let $c_v \in C_0$ be the closest codeword to the local view $x_v$ (breaking ties arbitrarily), and let $e_v := x_v + c_v$ be the local corresponding error (of minimum weight). We slightly abuse notation here and write $c_v$ or $e_v$ for both the vectors in $\f_2^{Q(v)}$ and the vectors in $\f_2^{Q}$ coinciding with $e_v, c_v$ on $Q(v)$ and equal to zero elsewhere.
\paragraph{The decoder.}
The decoder starts by computing the list $(c_v)_{v\in V_0}$ from the
decompositions $x_v=c_v+e_v$.
Note that this local decoding can be achieved by brute-force if need be since
the local code has constant length $\Delta^2$. Note also that the list of these local
views is not necessarily equal to the list of local views of a codeword
$c\in\C$ (if it does then the decoder outputs $c$).
The decoder then computes what we may call
the \emph{mismatch} of the list $(c_v)$, and which is defined as $z = \sum_v c_v$.
If the weight $|z|$ is too large, namely $\geq\delta
n/4\Delta^{3/2+\eps}$, then the decoder will refuse to continue and output
``far from the code''.
Otherwise the decoder proceeds by looking for a vertex $v\in V_0$ on which it
will update the value of $c_v$, replacing it by $c_v'=c_v+y_v$ for some non-zero
$y_v\in C_0$, so as to decrease the weight of the new value $z+y_v$ of the
mismatch. Among all possible vertices $v$ and small codewords $y_v\in C_0$, let it
choose the one that maximizes $|z|-|z+y_v|$.

The decoder proceeds iteratively in this way until it has a list of local views
with a zero mismatch, corresponding therefore to the list of local views of a
global codeword $c'\in\C$ which it outputs.

\begin{claim}
If $|z| < \delta n/4 \Delta^{3/2+\eps}$, then the decoder always converges to
the zero mismatch and a codeword $c'$ of $\C$.
\end{claim}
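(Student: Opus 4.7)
The plan is to interpret the decoder as a weight-decreasing procedure on the mismatch $z \in \f_2^Q$, show that $z$ always stays in the code $\C_1$ of Theorem~\ref{thm:main}, and then invoke that theorem to guarantee strict progress at every step as long as $z \neq 0$.

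First I would establish the crucial structural observation that the mismatch $z = \sum_{v\in V_0} c_v$ lies in $\C_1 = T(\G_1^\square, C_1^\perp)$. For any $u \in V_1$, the restriction $z_u$ of $z$ to $Q(u)$ is a sum of contributions from its $V_0$-neighbours: for each $A$-neighbour $v$, the shared set $Q(v)\cap Q(u)$ corresponds, via $\phi_u$, to a row of $u$'s local view, on which $c_v$ (a codeword of $C_0 = C_A \otimes C_B$) is a codeword of $C_B$; similarly each $B$-neighbour contributes a column that is a codeword of $C_A$. Summing these row- and column-supported contributions gives an element of $C_1^\perp = C_A\otimes\f_2^B + \f_2^A\otimes C_B$, so $z_u \in C_1^\perp$ for every $u \in V_1$, i.e.\ $z \in \C_1$. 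This step is the main conceptual point; once it is in place, the rest is driven directly by Theorem~\ref{thm:main}.

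Next, assume we are at an iteration where $0 < |z| < \delta n/4\Delta^{3/2+\eps}$. Applying Theorem~\ref{thm:main} to $z \in \C_1$ yields a vertex $v \in V_0$ and a vector $y \in \C_1$ supported on $Q(v)$ which coincides there with a codeword of $C_0 = C_A\otimes C_B$, and which satisfies $|z+y| < |z|$. Setting $y_v := y|_{Q(v)} \in C_0$, the update $c_v \mapsto c_v + y_v$ changes the mismatch to $z + y$. Since the decoder is defined to choose, over all such $(v, y_v)$, the one maximising $|z|-|z+y_v|$, it will at the very least achieve the decrease promised by Theorem~\ref{thm:main}; by Remark~\ref{rem:a} this decrease is at least $a\Delta^2$ for $\Delta$ large enough. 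Moreover, the new mismatch is still in $\C_1$ (since $y \in \C_1$ and $\C_1$ is linear), and its weight still satisfies the threshold condition of Theorem~\ref{thm:main}, so the hypotheses persist throughout the iteration.

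Iterating, the weight of $z$ is a non-negative integer that strictly decreases at each step by a positive amount, so after finitely many iterations we must reach $|z|=0$. Finally I would conclude by noting that the zero-mismatch condition is equivalent to global consistency of the local decodings: each square $q$ of $X$ is incident to exactly two vertices $v_1, v_2 \in V_0$, and $z_q = c_{v_1}(q) + c_{v_2}(q)$, so $z = 0$ means the $c_v$'s agree on overlaps and therefore are the local views of a single vector $c' \in \f_2^Q$. Since each $c_v \in C_0$, this vector satisfies $c'|_{Q(v)} \in C_0$ for every $v \in V_0$, i.e.\ $c' \in \C = T(\G_0^\square, C_0)$, which is what the decoder outputs. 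The only real obstacle is the first observation (that $z \in \C_1$); after that, the argument is a straightforward termination argument powered by Theorem~\ref{thm:main}.
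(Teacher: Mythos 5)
Your proposal is correct and follows the paper's own argument: the key point that the mismatch $z$ lies in $\C_1$ (which the paper justifies by noting $z$ is a sum of $C_0$-generators, hence in $\C_0^\perp\subset\C_1$, and which you verify directly via the row/column structure of the local views at $V_1$-vertices), followed by repeated application of Theorem~\ref{thm:main} and Remark~\ref{rem:a} to get a strictly decreasing mismatch weight. The termination and reassembly of the $c_v$'s into a global codeword $c'\in\C$ are exactly as in the paper.
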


\begin{proof}
The crucial observation is that $z$ is a codeword of $\C_1 = T(\G_1^\square, C_A \otimes \f_2^B + \f_2^A \otimes C_B)$, the Tanner code defined on the graph $\G_1^\square$ (with vertices in $V_1$), and local code given by the \emph{dual} tensor code. In the language of the quantum codes of the previous section, this is simply because $\sum_{v \in V_0} c_v$ is a sum of generators. 
Theorem~\ref{thm:main} asserts that there exists a generator $y_v \in C_0$ such
that $|z| - |z + y_v| > a \Delta^2$, for some constant $a>0$, independent
of~$n$ (and mentioned in Remark~\ref{rem:a}). 
\end{proof}

\begin{claim}
If $|z| < \delta n/4 \Delta^{3/2+\eps}$, then the distance from $x$ to $\C$ is bounded by
\begin{align}\label{eqn:ltc1}
d(x,\C) \leq  n\Big(1+\frac{1}{a}\Big) \zeta(x). 
\end{align}
\end{claim}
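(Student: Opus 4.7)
My plan is to bound $d(x,\C)$ by exhibiting the codeword $c'\in\C$ produced by the decoder of the previous claim and estimating $|x+c'|$ directly from the local decomposition maintained throughout the decoding process. Recall that the decoder starts from the list $(c_v^{(0)})_{v\in V_0}$ of closest local codewords, with local errors $e_v^{(0)}=x_v+c_v^{(0)}$, and produces the final list $(c_v^{(f)})_{v\in V_0}$, where $c_v^{(f)}=c_v^{(0)}+\sum_{k}y_v^{(k)}$ accumulates the updates $y_v^{(k)}\in C_0$ applied at $v$.

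The central combinatorial observation is that every square lies in exactly two $Q$-neighbourhoods of $V_0$-vertices, so that summing local view weights double-counts each coordinate. I would therefore begin by writing
\[ 2|x+c'|=\sum_{v\in V_0}|x_v+c_v^{(f)}|\leq \sum_{v\in V_0}|e_v^{(0)}|+\sum_{v\in V_0}\sum_k|y_v^{(k)}|, \]
using the triangle inequality on $x_v+c_v^{(f)}=e_v^{(0)}+\sum_k y_v^{(k)}$.

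The first sum is straightforward: $e_v^{(0)}=0$ whenever $v\notin T$ and $|e_v^{(0)}|\leq\Delta^2$ otherwise, so $\sum_v|e_v^{(0)}|\leq|T|\Delta^2=2n\zeta(x)$. For the second sum, each update $y_v^{(k)}\in C_0$ has weight at most $\Delta^2$, and Remark~\ref{rem:a} guarantees that each iteration of the decoder decreases $|z|$ by at least $a\Delta^2$; hence the total number $N$ of iterations satisfies $N\leq|z^{(0)}|/(a\Delta^2)$ and the total update weight is at most $N\Delta^2\leq|z^{(0)}|/a$. The same double-cover property yields $\sum_{v\in V_0}x_v=2x=0$ over $\f_2$, whence $z^{(0)}=\sum_v c_v^{(0)}=\sum_v e_v^{(0)}$ over $\f_2$, and consequently $|z^{(0)}|\leq\sum_v|e_v^{(0)}|\leq 2n\zeta(x)$.

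Combining the two estimates yields $2|x+c'|\leq 2n\zeta(x)+2n\zeta(x)/a=2n\zeta(x)(1+1/a)$, which is the announced bound. The main thing to keep straight is the factor $2$ arising from the double cover of $Q$ by $V_0$-neighbourhoods: it appears both in relating $|x+c'|$ to $\sum_v|x_v+c_v^{(f)}|$ and in the cancellation $\sum_v x_v=0$ that identifies $z^{(0)}$ with $\sum_v e_v^{(0)}$. Everything else is routine triangle-inequality manipulation, and the hypothesis $|z^{(0)}|<\delta n/4\Delta^{3/2+\eps}$ is used only to invoke convergence of the decoder and Remark~\ref{rem:a}.
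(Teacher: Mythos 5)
Your proof is correct and follows essentially the same route as the paper's: both arguments rest on the identity $\sum_{v\in V_0}x_v=0$ (each square lies in exactly two $V_0$-neighbourhoods), giving $z=\sum_v e_v$ and hence $|z|\leq|T|\Delta^2$, together with the bound $|z|/(a\Delta^2)$ on the number of decoder iterations coming from Remark~\ref{rem:a}. The only difference is bookkeeping — you bound the accumulated update weight $\sum_{v,k}|y_v^{(k)}|\leq N\Delta^2$ where the paper instead bounds $|e'_v|\leq\Delta^2$ over the set $S$ of modified vertices — and your explicit tracking of the factor $2$ from the double cover is in fact slightly cleaner than the paper's, which silently drops a $\tfrac12$ and recovers it only at the very last step.
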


\begin{proof}
Let $(c_v')_{v\in V_0}$ be the list of local views of the output codeword $c'$.
Writing $x=c'+e'$, we have that the local views $e'_v$ of $e'$ satisfy
$x_v'=c_v'+e_v'$ and 
\[
d(x,c') =|e'|= \frac 12\sum_{v\in V_0}|e_v'|.
\]
Let $S$ be the list of vertices of $V_0$ whose output value $c_v'$ differs from
the original value $c_v$. Writing $x=c'+e'$, and remembering that $T$ is the set
of vertices $v$ for which the original $e_v$ is non-zero, we may bound from
above $|e'|$ by
\[ \sum_{v \in T} |e_v| + \sum_{v\in S} |e'_v| \leq \Delta^2(|T| +|S|).\]

We must have $|S| \leq |z|/(a \Delta^2)$. Furthermore,
since each bit of the word $x$ appears twice in the sum $\sum_{v \in V_0} x_v =
\sum_{v\in V_0} (c_v + e_v)$, we have that $\sum_{v\in V_0} c_v = \sum_{v \in
V_0} e_v = z$, from which we infer that $|z|\leq |T|\Delta^2$. Putting this
together, we obtain

\[ |e'|\leq \Delta^2 ( |T| + |T|/a) = \Delta^2 |V_0| \Big(1+\frac{1}{a}\Big) \zeta(x)\]
whence
\[ d(x,\C) \leq   n\Big(1+\frac{1}{a}\Big) \zeta(x). \qedhere \]
\end{proof}

Let us now consider a word $x \in \f_2^Q$ such that $|z| \geq \delta n/4
\Delta^{3/2+\eps}$, in which case the decoder gives us nothing. We nevertheless
have that, again using $|z|\leq\Delta^2|T|$, 
\begin{align}\label{eqn:ltc2}
\frac 1nd(x, \C) \leq 1 \leq \frac{4 \Delta^{3/2+\eps} |z|}{\delta n} \leq
\frac{4 \Delta^{7/2+\eps} |T|}{\delta n} = \frac{4 \Delta^{7/2+\eps}
|V_0|}{\delta n} \zeta(x) = \frac{8 \Delta^{3/2+\eps}}{\delta} \zeta(x). 
\end{align}

From \eqref{eqn:ltc1} and \eqref{eqn:ltc2}, we obtain that the Tanner code $\C = T(\G_0^\square, C_0)$ is $\kappa$-locally testable with $\Delta^2$ queries and
\[ \kappa = \min \Big( \frac{a}{a+1}, \frac{\delta}{8 \Delta^{3/2+\eps}} \Big).\]

We conclude this section with a word of comment on the choice of the small component
code $C_A=C_B$. To obtain an LTC we need it to satisfy the hypotheses of
Theorem~\ref{thm:main}. One way is to obtain this by random choice, namely by
applying Theorem~\ref{thm:P-robust}, as we did in Section~\ref{sec:goodLDPC}.
But since this time we have no need for the robustness of $C_A^\perp\otimes
C_B^\perp$, there are alternatives: in \cite{DEL21}, the
component codes that are used have better robustness than what is guaranteed by 
 Theorem~\ref{thm:P-robust}.

\newpage

\appendix 
\section{Appendix: proof of Theorem~\ref{thm:P-robust}}
The proof follows the blueprint of \cite{PK21}, though puncturing is handled a
little differently.

Let $C_A\subset\f_2^A$ and $C_B\subset\f_2^B$ and let $C$ be the dual tensor
code $C= C_A\otimes
\f_2^B+\f_2^A\otimes C_B$. Let $X\subset\f_2^{A\times B}$ be a
$\Delta\times \Delta$ matrix. 
Let $H_A$ be an $r_A\times\Delta$ parity-check matrix for the
column code $C_A$.
The $r_A\times\Delta$ matrix $H_AX$ is such that, for every $b\in B$, its $b$-th
column is the $H_A$-syndrome of the $b$-th column of $X$.
We note that $X\in C$ iff $H_AX$ is such that everyone of its $r_A$ rows is a
codeword of $C_B$.

\begin{lemma}
\label{lem:injective}
Let $V$ be a set of vectors of $\f_2^{A}$. Suppose that the union of the
supports in $A$ of the vectors of $V$ has size $t<d_A/2$, where $d_A$ is the minimum
distance of $C_A$. Then $\rk{\{H_Av : v\in V\}}=\rk{V}$, where $\rk{}$ denotes the rank
function.
\end{lemma}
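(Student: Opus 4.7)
The strategy is to show that $H_A$, viewed as a linear map, is injective when restricted to the span of $V$; from this the equality of ranks follows immediately.

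First I would observe that the inequality $\rk{\{H_Av:v\in V\}}\leq\rk{V}$ is automatic, since $v\mapsto H_Av$ is linear and a spanning set of $V$ is mapped to a spanning set of $\{H_Av:v\in V\}$. So the real content is the reverse inequality, and for this it is enough to prove that the restriction of $H_A$ to $\mathrm{span}(V)$ has trivial kernel: if so, then $H_A$ is a vector space isomorphism from $\mathrm{span}(V)$ onto $\mathrm{span}(\{H_Av:v\in V\})$, and any maximal linearly independent subset of $V$ maps to a maximal linearly independent subset of its image.

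The key point is that every vector $w\in\mathrm{span}(V)$ is a sum (over $\f_2$) of elements of $V$, hence its support is contained in the union of supports of the vectors in $V$. In particular, $|w|\leq t < d_A/2 < d_A$. Now suppose $H_Aw=0$. Since $H_A$ is a parity-check matrix for $C_A$, this means $w\in C_A$; but $|w|<d_A$, so $w$ must be the zero vector. Hence $\ker H_A\cap\mathrm{span}(V)=\{0\}$, which gives the claimed equality of ranks.

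There is no substantive obstacle here: the only thing to keep straight is that the union of supports of $V$ controls the Hamming weight of every vector in $\mathrm{span}(V)$, and then the minimum distance of $C_A$ does the rest. Note that the hypothesis $t<d_A/2$ is stronger than what the argument literally needs ($t<d_A$ would suffice for this lemma in isolation), but presumably the factor of $2$ is calibrated for a subsequent application.
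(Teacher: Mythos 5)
Your proof is correct and follows essentially the same route as the paper: both arguments reduce to the fact that every element of $\mathrm{span}(V)$ is supported on the union of the supports, hence has weight less than $d_A$, so the syndrome map is injective on $\mathrm{span}(V)$. Your side remark is also accurate — the paper phrases this as injectivity of the syndrome map on the whole Hamming ball of radius $t$, which is what uses the factor of $2$ in $t<d_A/2$, whereas restricting attention to $\mathrm{span}(V)$ as you do only needs $t<d_A$.
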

\begin{proof}
The syndrome map $x\mapsto H_Ax^\intercal$ is injective on the Hamming ball of
radius $t$, therefore for a subset $W$ of vectors $x$ of $V$, $\sum_W H_Ax=0$ implies
$\sum_W x=0$. So if $W\subset V$ is a set of linearly independent vectors,
$\rk{\{H_Ax : x\in W\}}=|W|$.
\end{proof}

Let $d> 2\alpha \Delta_A$ be a lower bound on the minimum distance $d_A$ of $C_A$,
for some constant $\alpha$.

\begin{lemma}
\label{lem:rank}
Let $X$ be a $\Delta\times n$ matrix such that all of its columns are of
weight $\leq \sqrt{\Delta}/\log_2\Delta$ and such that the number of its
non-zero rows is at least $d/2$.
Then, for $\Delta$ large enough, $X$ has rank at least
$\alpha\sqrt{\Delta}\log\Delta$.
\end{lemma}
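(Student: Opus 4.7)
My plan is to prove the lemma by a greedy selection of linearly independent columns, driven by the sparsity hypothesis on the columns and the lower bound on the number of non-zero rows.

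More concretely, I will build a sequence of columns $c_{i_1}, c_{i_2}, \ldots$ of $X$ together with a growing set $R_k \subseteq [\Delta]$ of row indices that contains the supports $\supp(c_{i_1}) \cup \cdots \cup \supp(c_{i_k})$. The selection rule is: having chosen $c_{i_1}, \ldots, c_{i_k}$, if there is some non-zero row of $X$ not contained in $R_k$, pick any column $c_{i_{k+1}}$ with a non-zero entry in such a row. The key invariant, proved by induction, is that $c_{i_1}, \ldots, c_{i_k}$ are linearly independent: indeed each newly picked column $c_{i_{k+1}}$ has a non-zero entry in a row index outside $R_k$, while every previous column has zero entries on all such row indices, so it cannot be a linear combination of the previous ones.

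The process terminates only when $R_k$ contains the full set of non-zero rows of $X$, whose size is at least $d/2 \geq \alpha \Delta$ by assumption. On the other hand, since each chosen column has weight at most $s := \sqrt{\Delta}/\log_2 \Delta$, we have $|R_k| \leq k s$ at every step. Hence at termination $k s \geq \alpha \Delta$, giving
\[
k \geq \frac{\alpha \Delta}{s} = \alpha \sqrt{\Delta}\, \log_2 \Delta,
\]
which is the desired lower bound on the rank. The condition "$\Delta$ large enough" is only used implicitly to ensure that $s \geq 1$, so that the greedy step is non-vacuous; no further care is needed.

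There is essentially no obstacle: the argument is purely combinatorial, and the sparsity threshold $\sqrt{\Delta}/\log_2 \Delta$ has been chosen precisely so that covering $\alpha \Delta$ rows forces the existence of $\alpha \sqrt{\Delta}\log_2 \Delta$ columns. The only subtle point is making sure to formulate linear independence via the "new row index" argument rather than via any syndrome-style reasoning (as in Lemma~\ref{lem:injective}), since here the weights of the columns could well exceed $d_A/2$ collectively, and we only use the per-column sparsity bound.
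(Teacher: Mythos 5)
Your proposal is correct and is essentially the paper's own argument: the paper also greedily extracts a maximal sequence of columns each of whose supports brings a new row index, derives linear independence from the "new coordinate" observation, and concludes from the count $m\cdot w\geq d/2$ with $w\leq\sqrt{\Delta}/\log_2\Delta$. Your remark that syndrome-style reasoning is not needed here matches the paper, which reserves that for Lemma~\ref{lem:injective} and Corollary~\ref{cor:rank}.
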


\begin{proof}
Let $x_1,x_2,\ldots ,x_m$ be a maximum sequence of non-zero columns of $X$ with the
property that $\supp(x_i)\not\subset\cup_{1\leq j<i}\,\supp(x_j)$. 
We clearly have that $x_1\ldots ,x_m$ are linearly independent, and
we have $m\geq d/2w$ where $w$ is the maximum weight of a vector $x_i$, otherwise
the union of the supports of all columns of $X$ would be less than $d/2$, contrary
to our hypothesis. From $w\leq \sqrt{\Delta}/\log_2\Delta$ we have the
result.
\end{proof}

\begin{corol}\label{cor:rank}
Let $X$ be a $\Delta\times n$ matrix such that all of its columns are of
weight $\leq \sqrt{\Delta}/\log_2\Delta$ and that has at least $d/2$ non-zero
rows. Then $H_AX$ has rank at least
$\alpha\sqrt{\Delta}\log\Delta$.
\end{corol}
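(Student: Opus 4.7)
The plan is to combine Lemma~\ref{lem:injective} and Lemma~\ref{lem:rank} by re-using the explicit greedy construction inside the proof of Lemma~\ref{lem:rank}, then truncating it so that Lemma~\ref{lem:injective} applies.

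First, I run the greedy procedure of Lemma~\ref{lem:rank}: select non-zero columns $x_1,x_2,\ldots$ of $X$ so that at every step $\supp(x_{i+1})\not\subset\cup_{j\leq i}\supp(x_j)$. The resulting family is linearly independent (each $x_{i+1}$ carries a coordinate absent from the preceding ones), and by that lemma the procedure produces at least $m\geq\alpha\sqrt{\Delta}\log_2\Delta$ such vectors.

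Next, I truncate this sequence to the longest prefix $x_1,\ldots,x_k$ with
\[
\Big|\bigcup_{j\leq k}\supp(x_j)\Big| < d_A/2,
\]
so that the hypothesis of Lemma~\ref{lem:injective} is satisfied. Since each column has weight at most $w:=\sqrt{\Delta}/\log_2\Delta$, adding one more column increases the union of supports by at most $w$; consequently either $k=m$ (the whole sequence fits in the budget), or $(k{+}1)w \geq d_A/2$, giving $k\geq d_A/(2w)-1$. Using $d_A\geq d>2\alpha\Delta$, this yields $k\geq \alpha\sqrt{\Delta}\log_2\Delta - 1$, which is at least $\alpha\sqrt{\Delta}\log\Delta$ for $\Delta$ large enough (slightly shrinking $\alpha$ if one wants to absorb the $-1$). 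In the case $k=m$ the same lower bound follows directly from Lemma~\ref{lem:rank}.

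Finally, Lemma~\ref{lem:injective} applied to $V=\{x_1,\ldots,x_k\}$ — which is linearly independent and whose combined support has size $<d_A/2$ — asserts $\rk\{H_Ax_1,\ldots,H_Ax_k\}=k$. But these are $k$ columns of the matrix $H_AX$, so $\rk(H_AX)\geq k\geq \alpha\sqrt{\Delta}\log\Delta$, as desired. The only point that requires any care is the bookkeeping in the truncation step, and this works essentially because $w$ is barely above $\sqrt{\Delta}$ while $d_A/2=\Theta(\Delta)$, leaving room for roughly $\sqrt{\Delta}\log\Delta$ columns — precisely the bound claimed.
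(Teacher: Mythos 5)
Your proof is correct and follows essentially the same route as the paper: extract the linearly independent columns from (the proof of) Lemma~\ref{lem:rank}, verify that their combined support is below $d_A/2$, and apply Lemma~\ref{lem:injective}. The paper gets the support bound more directly---$\alpha\sqrt{\Delta}\log\Delta$ columns of weight at most $\sqrt{\Delta}/\log_2\Delta$ have total weight less than $\alpha\Delta<d/2$---so your truncation step is sound but not needed.
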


\begin{proof}
Choose $\alpha\sqrt{\Delta}\log\Delta$ linearly independent columns of $X$, which is possible by
Lem\-ma~\ref{lem:rank}. Since the weight of these vectors is not more than $\leq
\sqrt{\Delta}/\log_2\Delta$, the sum of all the weights of these vectors is
less than $d/2$, therefore Lemma~\ref{lem:injective} applies and the claim
is proved.
\end{proof}

\begin{lemma}
\label{lem:randomC}
Let $\bC$ be a random linear code of length $n$, defined by an $r\times n$ uniform
random matrix $\bH$. Let $V$ be a set of linearly independent vectors of
$\f_2^n$. The probability that all vectors of $V$ fall into the random code $\bC$
is equal to $1/2^{r|V|}.$ 
\end{lemma}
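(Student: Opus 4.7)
The plan is to exploit the independence of the rows of $\bH$ together with the linear independence of $V$. The event that every $v\in V$ lies in $\bC = \ker \bH$ is equivalent to $\bH v = 0$ for every $v\in V$, which decouples across the $r$ rows of $\bH$ since those rows are sampled independently and uniformly from $\f_2^n$.

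First, I would fix a single row $h$ of $\bH$, uniformly distributed in $\f_2^n$, and analyze the probability that $h\cdot v = 0$ simultaneously for every $v\in V$. Writing $V=\{v_1,\dots,v_k\}$ with $k=|V|$, the map
\[
\Phi: \f_2^n \to \f_2^k, \qquad h\mapsto (h\cdot v_1,\dots,h\cdot v_k)
\]
is $\f_2$-linear. Its dual map sends a vector $(\lambda_1,\dots,\lambda_k)\in\f_2^k$ to the linear form $h\mapsto h\cdot(\sum_i\lambda_i v_i)$; this dual map is injective precisely because the $v_i$ are linearly independent, so $\Phi$ is surjective. Consequently $\ker\Phi$ has index $2^k$ in $\f_2^n$, and a uniform $h$ satisfies $h\cdot v_i=0$ for all $i$ with probability exactly $1/2^k$.

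Next, since the $r$ rows of $\bH$ are mutually independent, the probability that every row annihilates every $v\in V$ factorises, giving $(1/2^k)^r = 1/2^{rk}$, which is the claimed $1/2^{r|V|}$. There is no real obstacle; the only thing to keep track of is the use of linear independence precisely at the step where $\Phi$ is shown surjective—if $V$ were only required to be distinct and non-zero, the correct exponent would involve $\dim\mathrm{Span}(V)$ rather than $|V|$, and the bound could be strictly larger.
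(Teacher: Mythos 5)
Your proof is correct and is essentially the paper's argument viewed in transposed form: the paper notes that the syndromes $\bH v$ of linearly independent vectors $v\in V$ are uniformly distributed and probabilistically independent, whereas you factor over the $r$ independent rows of $\bH$ and show each row annihilates all of $V$ with probability exactly $2^{-|V|}$ via surjectivity of $h\mapsto(h\cdot v_1,\dots,h\cdot v_k)$. Both hinge on the same use of linear independence, and your version spells out the surjectivity step that the paper leaves implicit.
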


\begin{proof}
The $\bH$-syndrome of a fixed non-zero vector $x$ is uniformly distributed in
$\f_2^r$ and the probability that it equals zero is therefore $1/2^r$. When the
fixed vectors $x\in V$ are independent in the sense of linear algebra, their
$\bH$-syndromes are independent in the sense of probability, hence the lemma.
\end{proof}

\begin{lemma}
\label{lem:randomCp}
Let $\bC$ be a random linear code of length $n$, defined by an $r\times n$ uniform
random matrix $\bH$. Let $\bC_p$ be the punctured code obtained from $\bC$ by
throwing away the first $p$ coordinates. Let $V$ be a set of linearly
independent vectors of $\f_2^{n-p}$ defined on the set of coordinates $\{p+1,\ldots n\}$.
Then the probability that all vectors of $V$ fall into the random code $\bC_p$
is at most $1/2^{(r-p)|V|}.$ 
\end{lemma}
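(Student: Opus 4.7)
The plan is to reduce the punctured setting to the unpunctured case already handled by Lemma~\ref{lem:randomC} by conditioning on the first $p$ columns of $\bH$. Write $\bH=[\bH_1\mid\bH_2]$ with $\bH_1\in\f_2^{r\times p}$ collecting the discarded columns and $\bH_2\in\f_2^{r\times(n-p)}$ the remaining ones. Unpacking the definition of puncturing, a vector $y\in\f_2^{n-p}$ lies in $\bC_p$ if and only if there exists $x\in\f_2^p$ such that $\bH_1 x^T+\bH_2 y^T=0$, equivalently, if and only if $\bH_2 y^T\in\Im(\bH_1)$.

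First I would condition on $\bH_1$. Setting $s=\rk(\bH_1)\leq p$, pick an invertible $r\times r$ matrix $M$ depending only on $\bH_1$ that sends $\Im(\bH_1)$ to the span of the first $s$ standard basis vectors of $\f_2^r$. After this change of basis, the constraint $\bH_2 y^T\in\Im(\bH_1)$ becomes $\bH_2' y^T=0$, where $\bH_2'$ denotes the bottom $(r-s)\times(n-p)$ block of $M\bH_2$; thus conditionally on $\bH_1$ one has $\bC_p=\ker \bH_2'$.

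The next step is to observe that conditionally on $\bH_1$ the matrix $\bH_2$ remains uniform over $\f_2^{r\times(n-p)}$, since $\bH_1$ and $\bH_2$ are independent column blocks of the uniform random $\bH$. As $M$ is an invertible function of $\bH_1$ alone, $M\bH_2$ is again uniform conditionally on $\bH_1$, and hence its bottom block $\bH_2'$ is a uniform random $(r-s)\times(n-p)$ matrix. Applying Lemma~\ref{lem:randomC} to $\bH_2'$ and the linearly independent set $V$ then yields the conditional probability $\Pr[V\subset\ker\bH_2'\mid\bH_1]=2^{-(r-s)|V|}\leq 2^{-(r-p)|V|}$, where the inequality uses $s\leq p$. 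Since this upper bound is uniform in $\bH_1$, it passes through the average over $\bH_1$ and gives the claim.

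The argument is essentially bookkeeping rather than technical; the one mildly delicate point is the invariance of the uniform distribution of $\bH_2$ under the change of basis $M$, which rests on the independence of the two column blocks $\bH_1$ and $\bH_2$ together with the fact that $M$ depends on $\bH_1$ only. I would take explicit care to separate the conditioning on $\bH_1$ from the subsequent averaging so that the invocation of Lemma~\ref{lem:randomC} on the $\bH_1$-dependent matrix $\bH_2'$ is legitimate.
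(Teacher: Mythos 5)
Your proposal is correct and takes essentially the same route as the paper: the paper conditions on the subspace $\bW$ spanned by the first $p$ columns and applies Lemma~\ref{lem:randomC} to the columns of $\bH_p$ projected onto $\f_2^r/\bW$, which is exactly what your explicit change of basis $M$ and bottom block $\bH_2'$ realize concretely. The conditioning, the use of $\rk(\bH_1)\leq p$, and the final averaging all match the paper's argument.
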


\begin{proof}
Let $\bW$ be the random linear subspace of $\f_2^r$ generated by the first $p$ columns
of $\bH$. Let $\bH_p$ be the $r\times (n-p)$ matrix deduced from $\bH$ by
throwing away its first $p$ columns. A vector
$x\in\f_2^{n-p}$ is in the punctured code $\bC_p$ iff its $\bH_p$-syndrome
$\bH_px^\intercal$
belongs to $\bW$. Denote by $E_V$ the event whereby all vectors of $V$ fall into
the punctured code $\bC_p$. 
Let $W$ be a fixed subspace of $\f_2^r$ of dimension $w\leq
p$. 
Denote by $E_W$ the event whereby the first $p$ columns of $\bH$ generate $W$.
We have that the projections onto $\f_2^r/W$ of the columns of $\bH_p$ are uniformly distributed 
and independent in the sense of probability. Therefore, if the first $p$ columns
of $\bH$ generate the fixed subspace $W$, Lemma~\ref{lem:randomC}
applies and we have $P(E_V | E_W)=1/2^{(r-w)|V|}$. Now we have
\[
P(E_V) = \sum_WP(E_W)P(E_V|E_W)
\]
where the sum is over all subspaces $W$ of $\f_2^r$ of dimension at most $p$.
Hence,
\[
P(E_V)\leq\sum_WP(E_W)\frac{1}{2^{(r-p)|V|}} = \frac{1}{2^{(r-p)|V|}}. \qedhere
\]
\end{proof}

\begin{lemma}
\label{lem:XP}
Let $p,r_B$ be integers such that $0<p<r_B<\Delta$. Let $X_p$ be
a fixed $\Delta\times(\Delta-p)$ binary matrix such that all its columns
are of weight $\leq \sqrt{\Delta}/\log_2\Delta$ and that has at least $d/2$
non-zero rows.
Let $P$ be a subset of $B$ of cardinality $p$.
Let $\X_P$ be the subset of $\Delta\times\Delta$ matrices that, when leaving
out all columns indexed by $P$, are all equal to $X_p$.

Let $\bC$ be the random code defined by a uniform
random $r_B\times\Delta$ parity-check matrix $\bH$. Then the probability that
the dual tensor code 
$C_A\otimes\f_2^B+\f_2^A\otimes\bC$ contains at least one matrix in $\X_P$
is at most $1/2^{(r_B-p)\alpha\sqrt{\Delta}\log\Delta}$.
\end{lemma}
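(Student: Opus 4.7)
The plan is to reduce the existence event in the lemma to the condition that a fixed set of many linearly independent rows of a deterministic matrix fall simultaneously into a punctured random code, and then to invoke Lemma~\ref{lem:randomCp}.

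First I would use the observation recalled at the start of this appendix: a matrix $X\in\f_2^{A\times B}$ lies in $C_A\otimes\f_2^B+\f_2^A\otimes\bC$ if and only if each of the $r_A$ rows of $H_AX$ is a codeword of $\bC$. Writing an arbitrary $X\in\X_P$ as the juxtaposition of the fixed block $X_p$ on the columns in $B\setminus P$ and a free $\Delta\times p$ block on the columns in $P$, the columns of $H_AX$ indexed by $P$ equal $H_A$ times arbitrary vectors of $\f_2^A$. Since we may assume $H_A$ has full row rank $r_A$ (being a parity-check matrix of $C_A$), these columns can be set to any element of $\f_2^{r_A}$, independently at each $b\in P$. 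Consequently, the existence of some $X\in\X_P$ inside the dual tensor code is equivalent to: every row of the fixed matrix $Y_p:=H_AX_p$, viewed as a vector in $\f_2^{B\setminus P}$, admits an extension on $P$ to a codeword of $\bC$, that is, belongs to the punctured code $\bC_P$.

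Next I would extract many independent rows of $Y_p$ using Corollary~\ref{cor:rank}. The hypotheses assumed on $X_p$ in the lemma (all columns of weight at most $\sqrt{\Delta}/\log_2\Delta$, at least $d/2$ non-zero rows) are exactly those of the corollary, which gives $\rk{Y_p}\geq\alpha\sqrt{\Delta}\log\Delta$. I then pick a set $V$ of $|V|=\alpha\sqrt{\Delta}\log\Delta$ linearly independent rows of $Y_p$ inside $\f_2^{B\setminus P}$.

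To finish, the probability the lemma bounds is at most the probability that all rows of $Y_p$, and in particular all vectors of $V$, fall simultaneously into $\bC_P$. Since $\bC$ is cut out by a uniform random $r_B\times\Delta$ parity-check matrix and $|P|=p$, Lemma~\ref{lem:randomCp} applies directly to the fixed family $V$ of linearly independent vectors in $\f_2^{\Delta-p}$, yielding the bound $2^{-(r_B-p)|V|}\leq 2^{-(r_B-p)\alpha\sqrt{\Delta}\log\Delta}$, which is what we want. The only delicate step is the reduction in the first paragraph: one must verify that the freedom in the columns of $X$ at positions in $P$ decouples into simultaneous, independent freedom across the $r_A$ rows and $p$ columns of $H_AX|_P$. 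This is precisely what the full-row-rank of $H_A$ provides; after this reduction, everything else is an immediate appeal to Corollary~\ref{cor:rank} and Lemma~\ref{lem:randomCp}.
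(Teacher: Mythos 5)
Your argument is correct and follows essentially the same route as the paper's: reduce the event that some $X\in\X_P$ lies in the dual tensor code to the event that all rows of $H_AX_p$ lie in the punctured code, lower-bound $\rk{(H_AX_p)}$ via Corollary~\ref{cor:rank}, and conclude with Lemma~\ref{lem:randomCp}. The only difference is that you establish a full equivalence in the first step (which requires $H_A$ to have full row rank), whereas the probability bound only needs the forward implication, so that extra care is superfluous.
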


\begin{proof}
Identifying $B$ with $\{1,\ldots ,\Delta\}$, we may suppose without loss of
generality that $P$ is equal to the set
$\{1,\ldots ,p\}$ of the
first $p$ coordinates. Let $\bC_p$ be deduced from
$\bC$ by puncturing the first $p$ coordinates. If $\X_P$ contains one matrix in
the dual tensor code, then it must be that all rows of $H_AX_p$ fall into $\bC_p$,
where $H_A$ is the parity-check matrix of $C_A$.
By Corollary~\ref{cor:rank} we have that the rank of $H_AX_p$ is at least $\alpha\sqrt{\Delta}\log\Delta$.
We now apply Lemma~\ref{lem:randomCp} to conclude.
\end{proof}

Let $0<\rho<1$ and let $0<\delta<1$
be such that $\delta < h^{-1}(1-\rho)$, where
$h(x)=-x\log_2(x)-(1-x)\log_2(1-x)$ denotes the binary entropy function.
In other words we take $\delta$ to be less than the Gilbert-Varshamov bound on
the relative minimum distance for codes of rate $\rho$.
This means in particular that, with probability 
tending to $1$ when $\Delta$
goes to infinity (more precisely, behaving as $1-1/2^{\Omega(\Delta)}$), a random code of rate $\rho$ and length $\Delta$
has relative minimum distance at least $\delta$.

\begin{theo}\label{thm:robustness-dual}
Let $C_A$ be a fixed code of length $\Delta$ and of minimum distance at least
$\delta\Delta$. Let $\bC$ be a random code obtained from a uniform random
parity-check matrix of size $r\times\Delta$ with $r\geq\Delta(1-\rho)$. With
probability at least $1-1/2^{c\Delta^{3/2}}$, with $c$ depending only on $\rho$,
we have that: 

every matrix $X$ of weight at most
$\frac\delta 2\Delta^{3/2}/\log_2\Delta$ is 
\begin{itemize}
\item either supported by at most $\delta\Delta/2$ rows and at most
$\delta\Delta/2$ columns,
\item or not in the dual tensor code $C_A\otimes\f_2^B+\f_2^A\otimes\bC$.
\end{itemize}
\end{theo}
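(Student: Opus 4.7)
The strategy is to apply a union bound over candidate witnesses $X\in\f_2^{A\times B}$ of weight at most $w := \frac{\delta}{2}\Delta^{3/2}/\log_2\Delta$, combined with the per-configuration probability estimate of Lemma~\ref{lem:XP}. The main organisational step is to split any such $X$ into its ``heavy-column'' part and a column-weight-bounded residue in a way that matches the hypotheses of Lemma~\ref{lem:XP}.

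Concretely, for any candidate $X$ with $|X|\leq w$, I would let $P\subset B$ denote the set of columns of $X$ of weight strictly greater than $\sqrt{\Delta}/\log_2\Delta$, and set $X_p := X|_{A\times(B\setminus P)}$. Since each heavy column contributes more than $\sqrt{\Delta}/\log_2\Delta$ to $|X|$, the weight budget forces $|P|\leq w\log_2\Delta/\sqrt{\Delta}=\delta\Delta/2$, which is already the target column bound in the conclusion. I would then split on the number of non-zero rows of $X_p$: if fewer than $\delta\Delta/2$ rows of $X_p$ are non-zero, then $\supp(X)$ is contained in the union of those rows and the columns in $P$, so $X$ satisfies the first clause of the theorem regardless of $\bC$. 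Otherwise $X_p$ has at least $\delta\Delta/2\geq d/2$ non-zero rows and all columns of weight $\leq\sqrt{\Delta}/\log_2\Delta$, which is exactly the hypothesis of Lemma~\ref{lem:XP}.

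It then suffices to rule out, by a union bound, the ``bad'' pairs $(P,X_p)$ of this second type: those for which the random dual tensor code contains some matrix of $\X_P$ extending $X_p$. Lemma~\ref{lem:XP} bounds the per-pair failure probability by $2^{-(r-p)\alpha\sqrt{\Delta}\log\Delta}$. The number of such pairs is at most $\sum_{p\leq\delta\Delta/2}\binom{\Delta}{p}\cdot\#\{X_p:|X_p|\leq w\}=2^{O(\Delta)+O(w\log_2\Delta)}=2^{O(\Delta^{3/2})}$. The hypothesis $\delta<h^{-1}(1-\rho)$ forces $\delta+\rho<1$, hence $r-p\geq\Delta(1-\rho)-\delta\Delta/2=\Omega(\Delta)$, and the per-pair probability is therefore $2^{-\Omega(\Delta^{3/2}\log\Delta)}$. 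The union bound then gives total failure probability at most $2^{-\Omega(\Delta^{3/2}\log\Delta)}\leq 2^{-c\Delta^{3/2}}$ for some $c>0$ depending only on $\rho$.

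The main obstacle, which is largely pre-addressed by the earlier lemmas, is to keep the enumeration entropy strictly below the per-configuration probability. The column-weight truncation at $\sqrt{\Delta}/\log_2\Delta$ is calibrated precisely for this balance: it is large enough that the weight budget forces $|P|\leq\delta\Delta/2$, yet small enough that Corollary~\ref{cor:rank} yields rank $\geq\alpha\sqrt{\Delta}\log\Delta$ for $H_AX_p$, which is exactly what produces the extra $\log\Delta$ factor in the exponent of Lemma~\ref{lem:XP} and so absorbs the $O(\Delta^{3/2})$ configuration entropy with room to spare.
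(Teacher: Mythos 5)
Your proposal is correct and follows essentially the same route as the paper's proof: both decompose $X$ into heavy columns $P$ (at most $\delta\Delta/2$ by the weight budget) plus a column-weight-bounded residue $X_p$, dispatch the case where $X_p$ has few non-zero rows to the first clause, and then union-bound over the configurations $(P,X_p)$ using Lemma~\ref{lem:XP}, with the $\log\Delta$ gain in the exponent absorbing the $2^{O(\Delta^{3/2})}$ enumeration entropy. The only cosmetic difference is that you count residues by total weight ($2^{O(w\log_2\Delta)}$) while the paper counts column-by-column ($\binom{\Delta}{\sqrt{\Delta}/\log_2\Delta}^{\Delta}$); both yield $2^{O(\Delta^{3/2})}$.
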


\begin{proof}
A matrix $X$ whose weight is thus upper bounded contains at most
$p=\delta\Delta/2$ columns of weight $>\sqrt{\Delta}/\log\Delta$. Therefore,
if $X$ is not supported by at most $\delta\Delta/2$ rows and at most
$\delta\Delta/2$ columns, it must fall into a subset $\X_P$ of matrices as
defined in Lemma~\ref{lem:XP}. 
So we apply Lemma~\ref{lem:XP} and count the expected number of sets $\X_P$ that
contain a matrix in the dual tensor code. The number of sets $\X_P$ is not more
than the number $\binom{\Delta}{p}\leq 2^{\Delta}$ of ways of choosing $P$,
times the number of ways of choosing $X_p$, which is not more than 
$\binom{\Delta}{\sqrt{\Delta}/\log_2\Delta}^{\Delta}\leq
2^{\Delta^{3/2}}.$ Therefore the number of $\X_P$'s is 
$2^{O(\Delta^{3/2})}$ while the probability that any given $\X_P$
contains a matrix in the dual tensor code is
$1/2^{\Omega(\Delta^{3/2}\log{\Delta})}$. Hence the result.
\end{proof}

Finally we shall need this last lemma.

\begin{lemma}
\label{lem:d/2}
Let $c$ be a codeword of weight $w$ of the dual tensor code $C_A\otimes\f_2^B+\f_2^A\otimes C_B$
where $C_A$ and $C_B$ have minimum distances $d_A$ and $d_B$. If $c$ is
supported by at most $d_A/2$ rows and at most $d_B/2$ columns, then it is
supported by at most $w/d_A$ columns and at most $w/d_B$ rows.
\end{lemma}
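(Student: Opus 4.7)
The plan is to exhibit a decomposition $c = C + R$ with $C \in C_A \otimes \f_2^B$ and $R \in \f_2^A \otimes C_B$ such that $|C| \leq w$ and $|R| \leq w$. Once such a decomposition is in hand, the conclusion is immediate: every nonzero column of $C$ is a nonzero codeword of $C_A$ and therefore has weight at least $d_A$, so the column support of $C$ has cardinality at most $|C|/d_A \leq w/d_A$; by symmetry the row support of $R$ has cardinality at most $|R|/d_B \leq w/d_B$; and since $\operatorname{supp}(c) \subseteq \operatorname{supp}(C) \cup \operatorname{supp}(R)$, these two sets witness the conclusion in the sense of Definition~\ref{def:robust}.

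To build the decomposition, let $A_0, B_0$ denote the row and column sets furnished by the hypothesis, satisfying $|A_0| \leq d_A/2 < d_A$ and $|B_0| \leq d_B/2 < d_B$. The target will be to produce a decomposition in which $C$ is supported on $A \times B_0$ and $R$ is supported on $A_0 \times B$. The heuristic motivation is that for any $b \notin B_0$ the column $c_b$ lives inside $A_0$ with weight at most $|A_0| < d_A$, while every nonzero codeword of $C_A$ has weight at least $d_A$; so forcing $C_b = 0$ (and hence $R_b = c_b$) is the natural choice. Starting from an arbitrary valid decomposition, one successively subtracts well-chosen elements of $C_A \otimes C_B$ (these leave $c$ invariant) so as to first zero out $C_b$ for every $b \notin B_0$ and then $R^a$ for every $a \notin A_0$. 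Such corrections can be built from rank-one tensors $u \otimes v$ with $u \in C_A$ and $v \in C_B$; their construction relies on the sharp inequalities $|A_0| < d_A$ and $|B_0| < d_B$, which guarantee that the restrictions $C_A \to \f_2^{A \setminus A_0}$ and $C_B \to \f_2^{B \setminus B_0}$ are injective.

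With this structured decomposition in hand, the bounds $|C|, |R| \leq w$ will follow from a regionwise weight count. In $(A \setminus A_0) \times B_0$ only $C$ contributes and its restriction equals $c$ there; in $A_0 \times (B \setminus B_0)$ only $R$ contributes and its restriction equals $c$ there; in the overlap $A_0 \times B_0$ the values of $C$ and $R$ are forced by the unique extensions of $C_b|_{A \setminus A_0}$ (resp.\ $R^a|_{B \setminus B_0}$) to codewords of $C_A$ (resp.\ $C_B$), and these extensions can be absorbed into the global budget $w$ because of the tight bounds on $|A_0|$ and $|B_0|$.

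The hardest part will be justifying the existence and mutual compatibility of the two successive corrections in $C_A \otimes C_B$, and performing the accounting in the overlap region $A_0 \times B_0$ cleanly enough to conclude $|C|, |R| \leq w$. Both steps use the premise inequalities $|A_0| \leq d_A/2$ and $|B_0| \leq d_B/2$ essentially, through the injectivity statements they imply for the relevant restriction maps.
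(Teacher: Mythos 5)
Your plan is correct and follows essentially the same route as the paper, which simply writes $c=\mathbf{r}+\mathbf{c}$ with $\mathbf{c}\in C_A\otimes\f_2^B$ supported on the $\leq d_B/2$ given columns and $\mathbf{r}\in\f_2^A\otimes C_B$ on the $\leq d_A/2$ given rows, then observes $|\mathbf{c}|,|\mathbf{r}|\leq w$ and counts nonzero columns and rows exactly as you do. The two steps you flag as hardest do go through (a single correction by the unique tensor-codeword lift of the common restriction to $(A\setminus A_0)\times(B\setminus B_0)$ yields the decomposition, and $|\mathbf{c}+\mathbf{r}|\geq\max(|\mathbf{c}|,|\mathbf{r}|)$ because each nonzero column of $\mathbf{c}$ has weight $\geq d_A\geq 2|A_0|$ and each nonzero row of $\mathbf{r}$ has weight $\geq d_B\geq 2|B_0|$), and the paper leaves them equally implicit.
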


\begin{proof}
Write $c=\mathbf{r}+\mathbf{c}$ where $\mathbf{r}\in \f_2^A\otimes C_B$ is a vector supported by at most
$d_A/2$ rows, all of which are in $C_B$, and where likewise $\mathbf{c}\in
C_A\otimes\f_2^B$ has at most $d_B/2$ non-zero columns, all of which are in
$C_A$.
We have obviously that $\mathbf{r}$ is supported by at most $|\mathbf{r}|/d_B$
rows and that $\mathbf{c}$ is supported by at most $|\mathbf{c}|/d_A$ columns.
Since we clearly have $w=|\mathbf{r}+\mathbf{c}|\geq |\mathbf{r}|$ and likewise $w\geq
|\mathbf{c}|$, the result follows.
\end{proof}

We are now ready to prove Theorem~\ref{thm:P-robust}, which we recall.

\Probust*

\begin{proof}
Let us fix $\delta <\min(h^{-1}(1-\rho_A),h^{-1}(1-\rho_B))$. By standard union
bound arguments, we have that the minimum distances $d_A$ and $d_B$ of $C_A$ and
$C_B$ satisfy $\min(d_A,d_B)>\delta\Delta$ with probability
$1-1/2^{\Omega(\Delta)}$. Set $w=\Delta^{3/2-\eps}$: we have that $w<\frac\delta
2\Delta^{3/2}/\log_2\Delta$ for $\Delta$ large enough. Applying
Theorem~\ref{thm:robustness-dual}, we have that with probability
	$1-1/2^{\Omega(\Delta)}$, every codeword $c$ of weight $|c|\leq w$ of
the dual tensor code $C_A\otimes\f_2^B+\f_2^A\otimes C_B$
is supported by at most $d_A/2$ rows and at most $d_B/2$ columns, from which we
infer, applying Lemma~\ref{lem:d/2}, that $c$ is supported by at most $|c|/d_A$ columns and
at most $|c|/d_B$ rows. In other words the dual tensor code 
$C_A\otimes\f_2^B+\f_2^A\otimes C_B$ is $w$-robust with probability $1-1/2^{\Omega(\Delta)}$.

To obtain resistance to puncturing, we repeat the argument for every pair of
subsets $A'\subset A$, $B'\subset B$, $|A'|=|B'|\geq\Delta - \Delta^\gamma$. We
will obtain that every dual tensor code $C_{A'}\otimes\f_2^{B'}+\f_2^{A'}\otimes
C_{B'}$ is {\em not} $w$-robust with probability $1/2^{\Omega(\Delta)}$, from
which it will follow that all such dual tensor codes are $w$-robust, except with
probability
\[
\sum_{i\leq\Delta^\gamma}\binom{\Delta}{i}^2\frac{1}{2^{\Omega(\Delta)}}=\frac{1}{2^{\Omega(\Delta)}}.
\]
We note that, for $|A'|=|B'|=\Delta'$, the punctured code $C_{A'}$ is obtained
from a random uniform
generator matrix of size $\rho_A\Delta\times\Delta'$ and the punctured code
$C_{B'}$ is obtained by first choosing a random variable $r\geq
\Delta'-\rho_B\Delta$ and then, a random uniform $r\times\Delta'$ parity-check
matrix for $C_{B'}$. Since $\Delta'/\Delta$ must tend to $1$ when $\Delta$ tends
to infinity, this justifies applying Theorem~\ref{thm:robustness-dual} to every
$C_{A'}\otimes\f_2^{B'}+\f_2^{A'}\otimes C_{B'}$, as just argued.
\end{proof}

\newpage
%

\newcommand{\etalchar}[1]{$^{#1}$}

\end{document}